\DeclareMathAlphabet{\pazocal}{OMS}{zplm}{m}{n}
\newcommand{\Oh}{\mathcal{O}}
\newcommand{\inprod}[1]{\left\langle #1 \right\rangle}
\newtheorem{theorem}{Theorem}
\newtheorem{lemma}{Lemma}
\newtheorem{definition}{Definition}
\newcommand{\proofbelow}{3pt}
\newcommand{\afterproof}{\hfill $\blacksquare$ \par \vspace{\proofbelow}}
\renewenvironment{proof}{\noindent\textbf{Proof.}\,}{\afterproof}
\author{Vasileios Nakos\thanks{Harvard University. \texttt{vasileiosnakos@g.harvard.edu}. Supported in part by NSF grant IIS-1447471.}} 
\title{One-Bit ExpanderSketch for One-Bit Compressed Sensing}
\begin{document}

\maketitle

\begin{abstract}
Is it possible to obliviously construct a set of hyperplanes $\mathcal{H}$ such that you can approximate a unit vector $x$ when you are given the side on which the vector lies with respect to every $ \textbf{h} \in \mathcal{H}$?
In the sparse recovery literature, where $x$ is approximately $k$-sparse, this problem is called one-bit compressed sensing and has received a fair amount of attention the last decade. In this paper we obtain the first scheme that achieves almost optimal measurements and sublinear decoding time for one-bit compressed sensing in the non-uniform case. For a large range of parameters, we improve the state of the art in both the number of measurements and the decoding time.

\end{abstract}

\section{Introduction}

Compressed sensing is a signal processing technique for reconstructing an approximatelly sparse signal, given access to fewer samples than what the Shannon-Nyquist Theorem requires. This technique was initiated the last decade \cite{candes2005decoding, donoho2006compressed} and has received an enormous amount of attention, because of its many applications in fields like machine learning, signal processing, computer vision, genetics etc. In this setting, one obtains $m$ linear measurements of a signal $x \in \mathbb{R}^n$:

	\[	y = Ax,	\]

where $A \in \mathbb{R}^{m \times n}$, and wants to approximatelly reconstruct $x$, exploiting prior information about its sparsity. This setting has been examined very carefully under different guarantees and many algorithms have been suggested.

However, in modern acquisition systems measurements need to be quantized: that it means that we have access only to $y= Q(Ax)$ for some $Q: \mathbb{R}^m \rightarrow \mathcal{A}^m$ \cite{boufounos20081}. In other words, $Q$ maps every element of the encoded vector to an element to a finite alphabet $\mathcal{A}$. The most common paradigm is when $\mathcal{A} = \{-1,1\}$ and

	\[	y = \mathrm{sign}(Ax),	\]
where the sign function is applied to any element of the vector. 
In hardware systems such as the analog-to-digital converter (ADC), quantization is the primary bottleneck limiting sample rates \cite{adcsurvey, le2005analog}. Moreover, as indicated in \cite{le2005analog}, the sampling rate has to decrease exponentially in order for the number of bits to be increased linearly. Furthmore, power consumption is dominated by the quantizer, leading to increased ADC costs. Thus, the one-bit compressed sensing framework provides a way to disburden the quantization bottleneck by reducing the sampling rate, i.e. the total number of measurements \cite{boufounos20081}. 

Apart from having important applications, the problem of one-bit compressed sensing is also interesting from a theoretical perspective, as it is a natural and fundamental question on high-dimensional geometry. One can think of it in the following way: can we construct a set of hyperplanes $\mathcal{H}$ such that we can approximate the direction a $k$-sparse vector $x \in \mathbb{R}^n$ given $\mathrm{sign}(\inprod{x,h})$, for all $h \in \mathcal{H}$? If we want a uniform guarantee, i.e. being able to approximate the direction of $x$ for every $x$, this means that every region defined by the hyperplanes and the sphere must have ``small' diameter. Othewise, if we want to reconstruct the direction of $x$ with some target probability, then we demand that most regions defined by the sphere and the hyperplane to have small diameter. The latter formulation is very closely related to the problem of random hyperplane tesselations \cite{plan2014dimension}.

In this work, we focus on designing a scheme for one-bit compressed sensing that enables sub-linear decoding time in the universe size $n$. Sub-linear decoding time has been extensively in the sparse recovery literature \cite{gilbert2007one,gilbert2012approximate,porat2012sublinear,gilbert2013l2,gilbert2017forall}, but overlooked in the literature of one-bit compressed sensing; the only paper that explored sub-linear decoding time is \cite{nakos2017fast}.

\subsection{Previous Work}

The problem of one-bit compressed sensing was introduced in \cite{boufounos20081}, and has received a fair amount of attention till then; one can see \cite{li2018survey} for details. Efficient algorithms, which proceed by by solving linear or convex programs when the sensing matrix consists of gaussians, appear in \cite{plan2013one,plan2013robust,gopi2013one}. Algorithms that are based on iterative hard-thresholding have been suggested in \cite{jacques2013quantized,jacques2013robust}. Moreover, the paper of Plan and Vershyin \cite{plan2014dimension} studies the very relevant problem of random hyperplane tesselations. The authors in \cite{gopi2013one,acharya2017improved} give also combinatorial algorithms for support-recovery from one-bit measurements using combinatorial structures called union-free families. Moreover, \cite{nakos2017fast} gives combinatorial algorithms for one-bit compressed sensing that run in sub-linear time, using ideas from the data streams literature and combinatorial group testing. 

The work of \cite{baraniuk2016one} introduces schemes for one-bit compressed sensing for the scenario where the underlying singal is sparse with respect to an overcomplete dictionary rather than a basis; this scenario is common in practice. Researchers have also tried to reduce the reconstruction error by employing different techniques and under different models. One approach suggested is Sigma-Delta quantization \cite{knudson2016one,gunturk2010sigma}. If adaptivity is allowed and, moreover, the measurements take the form of threshold signs, the authors in \cite{baraniuk2017exponential} show that the reconstruction error can be made exponentially small.

\subsection{Our Contribution}

In this paper, we study the non-uniform case under adversarial noise and give the first result that achieves sublinear decoding time and nearly optimal $\Oh(\delta^{-2}k + k\log n)$ measurements, where $\delta$ is the reconstruction error, $k$ is the sparsity and $n$ is the universe size. For clearness, this scheme allows reconstruction of a fixed $x\in \mathbb{R}^n$ and not of all $x \in \mathbb{R}^n$; we refer to this a non-uniform guarantee.

We compare with two previous schemes, which are the state of the art. The first scheme appears in \cite{plan2013robust}, which achieves $\delta^{-2}k \log (n/k)$ measurements and $\mathrm{poly}(n)$ decoding time, while the other appears in \cite{nakos2017fast} and achieves $\Oh( \delta^{-2} k + k \log(n/k) (\log k + \log \log n))$ measurements and $\mathrm{poly}(k,\log n)$ decoding time. We mention that the aforementioned two works are incomparable, since they exchange measurements and decoding time. However, generalizing \cite{nakos2017fast} and using the linking/clustering idea of \cite{larsen2016heavy} (which is closely related to list-recoverable codes), we are able to almost get the best of both worlds. Our scheme is strictly better the scheme of \cite{plan2013robust} when $k\leq n^{1-\gamma}$, for any constant $\gamma$; we note that the exponent of $k$ in our running time is the same as the exponent of $n$ in the running time of the relevant scheme of \cite{plan2013robust}.

We note that \cite{plan2013robust} discusses also uniform guarantees for the one-bit compressed sensing problem. Our result is non-uniform and thus incomparable with some of the results in that paper; the relevant parts from \cite{plan2013robust} are Theorem 1.1 and subsection 3.1. It is important to note that the guarantee of our algorithm \textbf{cannot} be achieved in the uniform setting, even when linear measurement are allowed \cite{cdd09} (i.e. we do not have access only to the sign of the measurement), thus a comparison is meaningful (and fair) only with a non-uniform algorithm.

\subsection{Preliminaries and Notation}

For a vector $x \in \mathbb{R}^n$ we define $H(x,k) = \{ i \in [n]: |x_i|^2 \geq \frac{1}{k} \|x_{-k}\|_2^2 \}$. If $ i\in H(x,k)$, we will say that $i$ is a $1/k$-heavy hitter of $x$. For a set $S$ we define $x_S$ to be the vector that occurs after zeroing out every $i \in [n] \setminus S$. We define $\mathrm{head}(k)$ to be the largest $k$ in magnitude coordinates of $x$, breaking ties arbitrarily, and we define $x_{-k}= x_{[n] \setminus \mathrm{head}(k)}$, which we will also refer to as the tail of $x$. Let $\mathcal{S}^{n-1} = \{x \in \mathbb{R}^n: \|x\|_2 =1 \}$. For a number $\theta$ we set $\mathrm{sign}(\theta) = 1$ if $\theta \geq 0$, and $-1$ otherwise. For a vector $v = (v_1,v_2,\ldots,v_n)$ we set $\mathrm{sign}(v) = (\mathrm{sign}(v_1), \mathrm{sign}(v_2),\ldots, \mathrm{sign}(v_n))$. We also denote $\mathcal{P}([n])$ to be the powerset of $[n]$.

\begin{definition}[Vertex Expander]
Let $\Gamma: [N] \times [D] \rightarrow [M]$ be a bipartite graph with $N$ left vertices, $M$ right vertices and left degree $D$. Then, the graph $G$ will be called a $(k,\zeta)$ vertex expander if for all sets $ S \subseteq [N], |S| \leq k$ it holds that $\Gamma(S) \geq (1 - \zeta) |S| D$.
\end{definition}

\subsection{Main Result}

The main result of our paper is the following.

\begin{theorem}
There exists a distribution $\mathcal{D} \in \mathbb{R}^{m \times n}$, a procedure $\textsc{Dec}:\{-1,+1\}^m \rightarrow \mathbb{R}^n$ and absolute constants $C_1,C_2>1$ such that

\[ \forall x \in \mathcal{S}^{n-1} : \mathbb{P}_{\Phi \sim \mathcal{D}}[ \hat{x} = \textsc{Dec}(\mathrm{sign}(\Phi x)): \|x-\hat{x}\|_2^2 > 2 \|x_{-k}\|_2^2 + \delta]  \leq e^{-C_1 \delta^{-1}k} + n^{-C_2},\]

and $\|\hat{x}\|_0 = O(k)$.

The number of rows of $\Phi$ is $ m = \Oh(k \log n + \delta^{-2}k)$, and the running time of $\mathrm{Dec}$ is $\mathrm{poly}(k,\log n)$.
\end{theorem}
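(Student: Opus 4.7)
The proof combines three layered ingredients. First, a standard reduction decouples identification from estimation: once we can produce a list $L \subset [n]$ of size $O(k)$ that contains every index in $H(x,k)$ (up to $n^{-C_2}$ failure), restricting attention to coordinates in $L$ and running a standard one-bit direction recovery using $O(\delta^{-2}k)$ Gaussian sign measurements on the $L$-supported subspace yields the target error $2\|x_{-k}\|_2^2 + \delta$ with probability $1-e^{-C_1\delta^{-1}k}$. The bulk of the work is therefore to build the identification phase using only $O(k\log n)$ measurements and $\mathrm{poly}(k,\log n)$ decoding time.

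For identification, the plan is to compose the one-bit gadget of \cite{nakos2017fast} with the linking/clustering framework of \cite{larsen2016heavy}. Partition each name $i \in [n]$ into $T = \Theta(\log n/\log k)$ chunks of $\Theta(\log k)$ bits each, so $i = (c_1(i),\dots,c_T(i))$ with $c_t(i) \in [k^{O(1)}]$, and fix a $(k,\zeta)$-vertex expander $\Gamma:[n]\times[D]\to[M]$ with constant $\zeta$ and $D=\Theta(\log n)$, $M=\Theta(kD/\log k)$. For every right vertex $j \in [M]$ and every chunk $t \in [T]$, instantiate an independent inner one-bit identification gadget inspired by \cite{nakos2017fast} whose only task is to recover the $\Theta(\log k)$-bit chunk value $c_t(i)$ of the (at most one) heavy hitter routed into $j$ by $\Gamma$. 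Because the inner gadget operates on a universe of size $k^{O(1)}$, the guarantee of \cite{nakos2017fast} collapses to $O(1)$ amortized one-bit measurements per bucket per chunk, so the total identification cost is $O(M \cdot T) = O(k\log n)$.

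The output of this stage is, for each right vertex $j$ and each chunk index $t$, a guess $\hat c_{j,t}$ of some heavy hitter's $t$-th chunk, but the correspondence between right vertices and heavy hitters is a priori unknown. This is exactly the situation the cluster-preserving clustering of \cite{larsen2016heavy} is built for: each true heavy hitter $i$ induces, on its $D$ expander neighbours, a $T$-tuple of chunk-guesses, and the expander property guarantees that on $(1-\zeta)D$ of them the inner gadget returns the correct chunk value $c_t(i)$, while no spurious ``phantom'' index can accumulate a comparable fraction of consistent labels. Running the clustering algorithm on the labelled bipartite structure returns a list $L$ of $O(k)$ candidate full indices containing $H(x,k)$ with failure probability $n^{-C_2}$, in $\mathrm{poly}(k,\log n)$ time.

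The main technical obstacle is showing that the inner one-bit gadget remains correct inside each expander bucket in spite of adversarial sign flips and of the fact that the ``tail'' of $x$ seen under a chunk projection can be much larger than $\|x_{-k}\|_2^2$ whenever several heavy hitters share a chunk. The plan here is to lean crucially on the expander's isolation property: for each heavy hitter $i$, all but $\zeta D$ of its right neighbours contain no competing heavy hitter and hence the inner gadget restricted to those buckets effectively sees only the clean tail of $x$. A careful per-bucket analysis along the lines of \cite{nakos2017fast}, combined with a Chernoff/union bound over the $DT$ inner instances, will deliver the $1-n^{-C_2}$ identification guarantee; the $e^{-C_1\delta^{-1}k}$ contribution to the failure probability then comes entirely from the outer Gaussian estimation stage, and $\|\hat x\|_0 = O(k)$ follows from $|L|=O(k)$.
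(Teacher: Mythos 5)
Your high-level decomposition (identify an $O(k)$-size candidate set, then estimate via Plan--Vershynin on the Gaussian submatrix restricted to those columns, treating the remaining coordinates as pre-quantization Gaussian noise) is exactly the paper's; for that outer layer you would still want to make explicit that $G_{[n]\setminus S}x_{[n]\setminus S} \sim \mathcal{N}(0,\|x_{[n]\setminus S}\|_2^2 I)$ with $\|x_{[n]\setminus S}\|_2 \le 1$, which is what lets Theorem~\ref{plan} absorb the unrecovered tail. The gap is in the identification stage, and it is quantitative, not just presentational.

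First, your vertex-expander parameters are inconsistent. A $(k,\zeta)$-vertex expander with $N=n$ left vertices, left degree $D$, and constant $\zeta<1$ must satisfy $|\Gamma(S)| \ge (1-\zeta)|S|D$ for $|S|=k$, hence needs $M \ge (1-\zeta)kD$ right vertices. You take $M = \Theta(kD/\log k) = o(kD)$, so no such expander exists. Second, even after fixing $M = \Theta(kD) = \Theta(k\log n)$, the measurement budget does not close: with $T = \Theta(\log n/\log k)$ chunks and an inner gadget per (right vertex, chunk), the count is $M\cdot T\cdot(\text{measurements per gadget})$, which is $\Theta(k\log^2 n/\log k)$ already with a single measurement per gadget and larger once the gadget must distinguish $k^{O(1)}$ chunk values. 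This exceeds $O(k\log n)$ whenever $k = n^{o(1)}$. The ``$O(1)$ amortized measurements per bucket per chunk'' claim is the load-bearing but unsupported step; information-theoretically each gadget on a $k^{O(1)}$-size universe needs $\Omega(\log k)$ one-bit measurements.

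The paper sidesteps this by a two-stage construction you omit: it first hashes $[n]$ into $\Theta(k/\log n)$ buckets to reduce to sparsity $k'=O(\log n)$ (Section~2.1), and only then builds the chunk structure. After that reduction the chunks have length $\Theta(\log\log n)$, there are $s=\Theta(\log n/\log\log n)$ of them, and the linking is done via $s$ hash functions $h_j:[n]\to[\mathrm{poly}(\log n)]$ together with a \emph{constant-degree} expander on the chunk index set $[s]$, not a degree-$\Theta(\log n)$ expander on $[n]$. Each partition $\mathcal{P}^{(j)}$ is handled by a \textsc{One-BitPartitionCountSketch} / \textsc{One-BitPartitionPointQuery} built from bucketed random signs and Gaussians (Theorem~\ref{partitionsketch}), costing $O(k\log\log n)$ measurements per partition, for a total $O(sk\log\log n)=O(k\log n)$. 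Without the reduction to small sparsity and without the partition-sketch primitive whose cost scales with the \emph{chunk} alphabet rather than with $D$, the $O(k\log n)$ bound does not follow from your construction. If you want to keep the vertex-expander flavor, you would need $D=O(1)$ and a substantially different per-bucket isolation argument; as written, the plan does not deliver the stated measurement complexity.
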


It should clear that since $\hat{x}$ is $O(k)$-sparse, we do not need to output an $n$-dimensional vector, but only the positions where the vector is non-zero.

\subsection{Overview of our Approach}

The one-bit compressed sensing framework has a neat geometrical representation: one can think of every measurement $\mathrm{sign}(\inprod{\Phi_j,x})$ indicating on which side of the hyperplane $\Phi_j$ the vector $x$ lies. One of the results of \cite{plan2013robust} shows that this is possible with $\Oh(\delta^{-2}k \log (n/k))$ random hyperplanes when random post-measurement noise $v$ is added, i.e. $y = \mathrm{sign}(\Phi x + v)$; the paper gives also other, very intersesting results, but we will not focus on them in this work. To achieve sublinear decoding time we do not pick the hyperplanes (measurements) at random, but we construct a structured matrix that allows us to find all $1/k$-heavy hitters of $x$. This approach also has been followed in one of the schemes of \cite{nakos2017fast}. There the author implemented the dyadic trick \cite{cormode2008finding} in the one-bit model, showing that it is possible to recover the heavy hitters of $x$ from one-bit measurements, using $O(k \log (n/k) (\log k +\log \log n))$ measurements. Our results is an extension and generalization of that paper, along with the linking and clustering technique of \cite{larsen2016heavy}.

In the core of our scheme, lies the design of a randomized scheme which is analogous to the ``partition heavy hitters'' data structure of \cite{larsen2016heavy}; we call this scheme $\textsc{One-Bit PartitionPointQuery}$. More concretely, the question is the following: given a partition $\mathcal{P}$ of the universe $[n]$, is it possible to decide if a given set $S \in \mathcal{P}$ is heavy, when we are given access only to one-bit measurements? We answer this question in the affirmative and then combine this routine with the graph clustering technique of \cite{larsen2016heavy}.
 We thus show that, similarly to that paper, it is possible to reduce the problem of finding the heavy coordinates in the one-bit framework to the same clustering problem.

\subsection{Toolkit}

\begin{lemma}[Chernoff Bound]

Let $X_1,\ldots,X_r$ be Bernoulli random variables with $\mathbb{E}[X_i] = p$. There exists an absolute constant $c_{ch}$ such that 

\[	\mathbb{P} \left[  |\sum_i X_i - pr| > \epsilon pr \right] \leq e^{-c_{ch}\epsilon^{-2} pr} \]

\end{lemma}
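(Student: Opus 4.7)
The plan is to decompose the two-sided event into an upper and a lower tail and bound each via the classical Markov-on-MGF (Chernoff) trick, with the constant $c_{ch}$ absorbing the resulting factor of $2$. Writing $S = \sum_{i=1}^r X_i$, so $\mathbb{E}[S] = pr$, I would split
\[ \mathbb{P}[|S - pr| > \epsilon pr] \leq \mathbb{P}[S > (1+\epsilon)pr] + \mathbb{P}[S < (1-\epsilon)pr] \]
and treat the two summands separately.

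For the upper tail, for any $t > 0$ I would apply Markov's inequality to $e^{tS}$ to obtain $\mathbb{P}[S \geq (1+\epsilon)pr] \leq e^{-t(1+\epsilon)pr}\,\mathbb{E}[e^{tS}]$. Independence of the $X_i$ factorizes the MGF as $\mathbb{E}[e^{tS}] = (1 - p + pe^t)^r$, and the elementary bound $1+u \leq e^u$ upgrades this to $\mathbb{E}[e^{tS}] \leq \exp(pr(e^t - 1))$. Picking the canonical optimizer $t = \ln(1+\epsilon)$ yields the one-sided Chernoff bound $\exp\bigl(-pr((1+\epsilon)\ln(1+\epsilon) - \epsilon)\bigr)$, and the standard calculus estimate $(1+\epsilon)\ln(1+\epsilon) - \epsilon \geq \epsilon^2/3$ on $[0,1]$ then delivers the clean form $\exp(-\epsilon^2 pr/3)$. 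For $\epsilon > 1$ the optimized bound is only sharper, so the same functional form continues to hold.

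The lower tail is entirely symmetric: apply Markov to $e^{-tS}$ with $t > 0$, use the same product factorization of the MGF and the same inequality $1+u \leq e^u$, then optimize and invoke $(1-\epsilon)\ln(1-\epsilon) + \epsilon \geq \epsilon^2/2$ on $[0,1]$ to obtain $\exp(-\epsilon^2 pr /2)$. Summing the two tail bounds and taking $c_{ch}$ to be any sufficiently small absolute constant (so the factor $2$ from the union bound is absorbed into the exponent, possibly after a harmless adjustment for very small $\epsilon^2 pr$) establishes the claim. The only real obstacle is purely bookkeeping: pinning down the two elementary calculus inequalities above uniformly on $[0,1]$ so that $c_{ch}$ can be fixed once and for all as a single absolute constant independent of $\epsilon$, $p$, and $r$.
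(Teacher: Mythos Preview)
The paper does not prove this lemma at all; it is listed in the ``Toolkit'' subsection as a standard fact, with no accompanying argument. Your derivation is the textbook Chernoff/MGF argument and is correct for the usual multiplicative Chernoff bound with exponent proportional to $\epsilon^2 pr$.

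One point worth flagging: the lemma as printed has the exponent $c_{ch}\,\epsilon^{-2} pr$, not $c_{ch}\,\epsilon^{2} pr$. Read literally, that statement is false (take $\epsilon$ small with $pr$ fixed and the right-hand side becomes arbitrarily tiny while the left-hand side does not), so this is almost certainly a typo in the paper for $\epsilon^{2}$. Your proof establishes the correct $\epsilon^{2}$ version, which is what the paper actually uses downstream; you may want to note explicitly that you are interpreting the exponent as $\epsilon^{2}$ rather than $\epsilon^{-2}$.
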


\begin{lemma}[Bernstein's Inequality]
There exists an absolute constant $c_B$ such that for independent random variables $X_1,\ldots,X_r,$ with $|X_i| \leq K$ we have that

\[	\forall \lambda>0, \mathbb{P}\left[ |\sum_i X_i - \mathbb{E} \sum_i X_i | > \lambda \right] \leq e^{-C_B \lambda/\sigma^2} + e^{-C_B \lambda/K},		\]

where $\sigma^2 = \sum_i \mathbb{E}(X_i - \mathbb{E}X_i)^2$.

\end{lemma}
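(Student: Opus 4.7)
The plan is to follow the standard Cram\'er--Chernoff moment-generating-function approach, specialized to the sub-exponential regime. First, by replacing $X_i$ with its centered version $Y_i = X_i - \mathbb{E}X_i$, I may assume without loss of generality that the summands are centered, $|X_i| \leq 2K$ (the factor of $2$ is absorbed into $C_B$), and $\sum_i \mathbb{E}[X_i^2] = \sigma^2$. Applying Markov's inequality to $\exp(t\sum_i X_i)$ for an arbitrary $t > 0$ gives
\[ \mathbb{P}\!\left[\sum_i X_i > \lambda\right] \leq e^{-t\lambda} \prod_i \mathbb{E}[e^{tX_i}]. \]
The heart of the per-variable estimate is the moment bound $\mathbb{E}[X_i^k] \leq (2K)^{k-2}\mathbb{E}[X_i^2]$ for $k \geq 2$, which is immediate from $|X_i| \leq 2K$. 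Summing the Taylor series of $e^{tX_i}$ term-by-term and using $1 + z \leq e^z$ yields $\mathbb{E}[e^{tX_i}] \leq \exp\bigl(\mathbb{E}[X_i^2]\cdot(e^{2tK}-1-2tK)/(2K)^2\bigr)$; multiplying over $i$ collapses the individual variances into a single $\sigma^2$ in the exponent.

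The second step is to optimize over $t$. The standard choice $t = \frac{1}{2K}\log(1 + 2K\lambda/\sigma^2)$ produces the Bennett-type bound
\[ \mathbb{P}\!\left[\sum_i X_i > \lambda\right] \leq \exp\!\left( -\frac{\sigma^2}{(2K)^2}\, h\!\left(\frac{2K\lambda}{\sigma^2}\right)\right), \qquad h(u) = (1+u)\log(1+u) - u. \]
To recast this in the form of the lemma I would use the elementary fact that $h(u) \geq c\min(u, u^2)$ for an absolute constant $c > 0$; this follows from $h(u) = u^2/2 + O(u^3)$ near the origin and $h(u) \sim u\log u$ for large $u$. When $u = 2K\lambda/\sigma^2 \leq 1$ the quadratic regime dominates and delivers a Gaussian-type tail $e^{-C_B\lambda^2/\sigma^2}$; when $u > 1$ the linear regime delivers the sub-exponential tail $e^{-C_B\lambda/K}$. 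In either regime the minimum of the two exponentials is bounded by their sum, yielding the right-hand side stated. A symmetric argument applied to $-X_i$ gives the lower tail, and the factor of $2$ from the union bound is absorbed into $C_B$.

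The only mildly non-routine step is the case-split inequality for $h(u)$, and even that is classical, so there is no genuine obstacle --- only careful bookkeeping of absolute constants. In passing, the first exponent printed in the statement appears to be missing a square: the argument above produces $e^{-C_B\lambda^2/\sigma^2}$ in the small-deviation regime rather than $e^{-C_B\lambda/\sigma^2}$, and the latter cannot be the correct Bernstein bound (it would be vacuous whenever $\sigma^2 \gg \lambda$, e.g.\ when the number of summands $r$ is large while $\lambda$ stays moderate).
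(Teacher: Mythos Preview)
The paper does not prove this lemma at all: it is listed in the ``Toolkit'' subsection as a standard cited result, alongside the Chernoff bound, with no accompanying argument. So there is nothing to compare your approach against.

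Your proposal is the textbook Cram\'er--Chernoff derivation of Bernstein's inequality and is correct. Your closing observation is also well taken: the first term in the stated bound should indeed carry $\lambda^2/\sigma^2$ rather than $\lambda/\sigma^2$, and the paper's statement as printed is a typo. (One can also see this from the paper's own use of the lemma in Section~2.1, where the quantities plugged in for $\sigma^2$ and $\lambda$ only give a $\Theta(\log n)$ exponent if the square is present.)
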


\begin{theorem}[Fixed Signal, Random Noise Before Quantization \cite{plan2013robust}]\label{plan}

Let $x \in \mathbb{R}^N $ and $G \in \mathbb{R}^{m \times N}$, each entry of which is a standard gaussian. If $ y = \mathrm{sign}(G x + v)$, where $v \sim \mathcal{N}(0,\sigma^2I)$, then the following program

		\[ \hat{x} = \mathrm{argmax} \inprod{y,G x},	s.t.~\|z\|_1 \leq \sqrt{k}\]

returns a vector $\hat{x}$ such that $\|x - \hat{x}\|_2^2 \leq \delta$, as long as 

	\[	m = \Omega( \delta^{-2}(\sigma^2+1)k \log(N/k) ).		\]
\end{theorem}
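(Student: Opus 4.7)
The plan is to split the measurement budget into two independent blocks: an identification block of $\Oh(k\log n)$ rows that outputs a set $L\subseteq[n]$ of size $\Oh(k)$ containing every $1/k$-heavy hitter of $x$, and an estimation block of $\Oh(\delta^{-2}k)$ standard Gaussian rows, to which Theorem~\ref{plan} is applied after the effective universe has been collapsed to $L$. These two blocks are concatenated into the sketching matrix $\Phi$ drawn from $\mathcal{D}$.

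For the identification block I would build, along the lines sketched in the overview, a $\textsc{One-Bit PartitionPointQuery}$ primitive: given a partition $\mathcal{P}$ of $[n]$ and a query cell $S\in\mathcal{P}$, decide with high probability whether $\|x_S\|_2^2 \gtrsim 1/k$. The primitive should be constructed by hashing the cells into a few one-bit buckets many times and counting sign agreements with a correlated Gaussian witness, using Chernoff and Bernstein to control respectively the sketch randomness and the interference from coordinates outside $S$. I then plug this primitive into the linking-and-clustering architecture of Larsen et al.: each $i\in[n]$ is named by an $\Oh(\log n)$-bit string split into $T=\Oh(\log n/\log k)$ chunks of $\Oh(\log k)$ bits, the primitive is invoked on the partition induced by each chunk to retain at most $\Oh(k)$ surviving chunk values, and a constant-rate expander code over the $T$ chunks stitches the surviving chunks back into at most $\Oh(k)$ full names. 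The total row count is $\Oh(k\log n)$, the decoding is carried out in $\mathrm{poly}(k,\log n)$ time, and a union bound over chunks yields failure probability at most $n^{-C_2}$ for the event $L\supseteq H(x,k)$.

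For the estimation block I would draw an independent Gaussian matrix $G\in\R^{m'\times n}$ with $m'=\Oh(\delta^{-2}k)$ rows and expose only the signs $\mathrm{sign}(Gx)$. Once $L$ is returned by the identification step, I rewrite $Gx = G_L x_L + G_{[n]\setminus L}x_{[n]\setminus L}$; because the columns of a Gaussian matrix are independent, the second term is a Gaussian noise vector $v$ with per-coordinate variance $\sigma^2=\|x_{[n]\setminus L}\|_2^2\le 1$ that is independent of $G_L$. Theorem~\ref{plan}, applied on the effective universe $L$ of size $\Oh(k)$ with sparsity $k$, then yields (with failure probability at most $e^{-C_1\delta^{-1}k}$) a vector $\hat{x}$ supported on $L$ with $\|x_L-\hat{x}\|_2^2\le\delta$; the hypothesis $m'=\Om(\delta^{-2}(\sigma^2+1)k\log(|L|/k))$ is satisfied since $\sigma^2+1=\Oh(1)$ and $\log(|L|/k)=\Oh(1)$.

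Combining the two guarantees then yields the theorem. By Pythagoras, $\|x-\hat{x}\|_2^2 = \|x_L-\hat{x}\|_2^2 + \|x_{[n]\setminus L}\|_2^2$. Any $i\notin H(x,k)$ satisfies $|x_i|^2 < \|x_{-k}\|_2^2/k$, so the at most $k$ top-$k$ coordinates not captured by $H(x,k)$ (and hence possibly missing from $L$) contribute at most $\|x_{-k}\|_2^2$, while the remaining tail contributes another $\|x_{-k}\|_2^2$, giving $\|x_{[n]\setminus L}\|_2^2\le 2\|x_{-k}\|_2^2$. Adding the $\delta$ from the estimation step gives $\|x-\hat{x}\|_2^2\le 2\|x_{-k}\|_2^2+\delta$; the two failure events union-bound to $e^{-C_1\delta^{-1}k}+n^{-C_2}$; the support satisfies $\|\hat{x}\|_0\le|L|=\Oh(k)$; and the total row count is $\Oh(k\log n+\delta^{-2}k)$, matching $m$. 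The main obstacle I expect is the analysis of the $\textsc{One-Bit PartitionPointQuery}$ primitive itself: because one-bit measurements discard magnitude, separating a cell of $\ell_2^2$-mass $\gtrsim 1/k$ from one of mass $\ll 1/k$ requires simultaneously controlling the Chernoff concentration from sketch repetitions and the Bernstein-type tail from the $n-|S|$ interfering coordinates, and then union-bounding over all $\Oh(k)$ active cells in each of the $T$ chunks while holding the row budget at $\Oh(k\log n)$ and the failure probability at $n^{-C_2}$.
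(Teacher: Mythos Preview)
Your proposal does not address the stated theorem. Theorem~\ref{plan} is the Plan--Vershynin result quoted from \cite{plan2013robust}; the present paper does not prove it but imports it as a black box in the Toolkit section. What you have written is instead a proof of Theorem~1 (the main result), and indeed you invoke Theorem~\ref{plan} as a subroutine inside your own argument. So as a proof of the displayed statement, the proposal is simply off target.

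If your intent was Theorem~1, then your outline matches the paper's proof essentially line for line: the paper also splits $\Phi$ into an identification block (the matrix of Theorem~\ref{onebitheavyhitters}, built via \textsc{One-BitPartitionPointQuery} plus the Larsen--Nelson linking/clustering) and a Gaussian estimation block $G$; it also rewrites $Gx=G_S x_S+G_{[n]\setminus S}x_{[n]\setminus S}$, treats the second summand as Gaussian noise of variance $\|x_{[n]\setminus S}\|_2^2\le 1$, applies Theorem~\ref{plan} on the restricted universe $S$, and finishes with the same Pythagoras-plus-tail bound $\|x_{[n]\setminus S}\|_2^2\le 2\|x_{-k}\|_2^2$. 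The only substantive work you defer---the analysis of \textsc{One-BitPartitionPointQuery} and the expander-based stitching---is exactly what the paper spends Sections~2.2--2.4 on, so your identification of that as the main obstacle is accurate.
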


\section{Main Algorithm}

Our algorithm proceeds by finding a set $S$ of size $\Oh(k)$ containing all coordinates $i \in H(x,k)$ and then runs the algorithm of \cite{plan2013robust}, by restrictring on columns indexed by $S$. The scheme that is used to find the desired set $S$ is guaranteed by the following Theorem.

\begin{theorem}\label{onebitheavyhitters}
There exists a randomized construction of a matrix $\Phi \in \mathbb{R}^{m' \times n}$, a decoding procedure $\mathrm{OneBitHeavyHitters}: \{-1,1\}^{m'} \rightarrow \mathcal{P}([n])$ and an absolute constant $c$, such that $S = \mathrm{OneBitHeavyHitters}(\mathrm{sign}(\Phi x))$ satisfies the following, with probability $1-\frac{1}{n^{C_1}}$. a) $|S| \leq ck$, and b) $\forall i \in H(x,k), i \in S$. Moreover, the number of rows of $\Phi$ equals $m' = \Oh(k \log n)$ and the running time of $\mathrm{OneBitHeavyHitters}$ is $\Oh(k \cdot \mathrm{poly}(\log n))$.
\end{theorem}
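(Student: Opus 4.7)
The plan is to follow the ExpanderSketch template of Larsen--Nelson--Nguyen--Thorup \cite{larsen2016heavy}, but replacing their linear-measurement partition point query with a new one-bit variant. The LNNT framework reduces finding $\ell_2$ heavy hitters to a graph clustering problem on a vertex expander, given a primitive that can estimate the $\ell_2^2$ masses of the parts of a fixed partition; I will design a one-bit analog of this primitive, call it $\textsc{One-Bit PartitionPointQuery}$, and then invoke the LNNT reduction essentially as a black box. The key parameter choice will be a $(k,\zeta)$-vertex expander $\Gamma:[n]\times[D]\to[M]$ with $M=\Theta(k)$ and $D=\Theta(\log n/\log(M/k))$; each coordinate $d$ of $\Gamma$ gives a partition $\mathcal{P}_d$ of $[n]$ into $M$ parts, and applying the one-bit partition point query separately to each $\mathcal{P}_d$ produces, per layer, a candidate set of size $\Oh(k)$ containing the ``heavy'' buckets. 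Feeding these sets into the LNNT clustering decoder will then recover a superset of $H(x,k)$ of size $\Oh(k)$ in time $\mathrm{poly}(k,\log n)$.

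For the one-bit partition point query itself, my plan is to use, for each part $P_j$, a batch of one-bit measurements of the form $\mathrm{sign}\bigl(\inprod{\sigma^{(r)}_{j}, x} + v^{(r)}_{j}\bigr)$, where $\sigma^{(r)}_{j}$ is a random $\pm 1$ vector supported on $P_j$ and $v^{(r)}_{j}\sim \mathcal{N}(0,\sigma^2)$ is an independent Gaussian dither. The linear combination $\inprod{\sigma^{(r)}_{j}, x}$ is sub-Gaussian with variance $\|x_{P_j}\|_2^2$, so the probability of a positive sign is a known monotone function of $\|x_{P_j}\|_2/\sigma$; the empirical fraction of positive signs across $R = \Oh(\log n)$ repetitions then estimates this probability, and a Chernoff-type bound allows us to decide, with failure probability $n^{-\Omega(1)}$, whether $\|x_{P_j}\|_2^2$ exceeds a prescribed threshold $\tau$. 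Because we do not know $\|x_{-k}\|_2^2$ in advance, I would run the scheme in parallel over a geometrically-spaced grid of $\Oh(\log n)$ candidate thresholds.

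With this primitive in hand, the remainder is a port of the LNNT analysis. For each layer $d$, a standard counting argument gives that at most $\Oh(k)$ buckets can be heavy at threshold $\tau=\Theta(\|x_{-k}\|_2^2/k)$, so the output $B_d$ has $|B_d|=\Oh(k)$; the vertex expansion of $\Gamma$ together with a union bound over $H(x,k)$ then guarantees that every $1/k$-heavy hitter $i$ satisfies $\Gamma(i,d)\in B_d$ for at least $(1-O(\zeta))D$ layers, which is enough for the LNNT clustering subroutine to unambiguously recover $i$. Summing the per-layer cost across the $D$ layers and amortizing carefully will give a total of $\Oh(k\log n)$ one-bit measurements and decoding time $\Oh(k\cdot\mathrm{poly}(\log n))$.

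The hardest step will be the analysis of the one-bit partition point query. A single one-bit measurement returns only a sign and carries no magnitude information on its own; the Gaussian dither is what breaks this symmetry, but making the scheme simultaneously (i) robust to the ``collision noise'' contributed by the non-heavy items sharing a bucket, which I would bound using Bernstein, (ii) tight enough to distinguish $\|x_{P_j}\|_2^2 \geq \tau$ from $\|x_{P_j}\|_2^2 \leq \tau/2$ using only $\Oh(\log n)$ samples, and (iii) compatible with the clustering step --- which requires the number of false-positive heavy buckets per layer to stay genuinely $\Oh(k)$ rather than $\Oh(k \cdot \mathrm{polylog})$ --- is the main technical obstacle. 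Once that is settled, the rest of the proof is a nontrivial but essentially routine adaptation of the LNNT plumbing.
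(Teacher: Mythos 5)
The high-level architecture you sketch (a one-bit partition point query fed into the LNNT clustering machinery) is the right template and essentially matches the paper, but your proposed primitive has a fatal flaw. You take a \emph{fresh} random sign vector $\sigma_j^{(r)}$ on $P_j$ for every repetition $r$, together with a fresh dither $v_j^{(r)}$, and you claim that the probability of a positive sign is a monotone function of $\|x_{P_j}\|_2/\sigma$. It is not: both $\inprod{\sigma_j^{(r)},x}$ and $v_j^{(r)}$ are symmetric about zero, so $\Pr[\inprod{\sigma_j^{(r)},x}+v_j^{(r)}>0]=\tfrac12$ exactly, independent of $\|x_{P_j}\|_2$. The empirical fraction of positive signs therefore carries \emph{no} magnitude information; the Chernoff step you appeal to would be estimating the constant $1/2$. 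The useful signal is not in the marginal bias of the output sign but in its \emph{correlation with the random sign assigned to $P_{j^*}$}, and that is precisely what the paper exploits: it hashes the $T$ parts into $O(k)$ buckets via three independent hash functions per repetition, multiplies each part by an independent random sign $\sigma_{j,B,\ell,r}$, scales coordinates by independent Gaussians $g_{i,r}$ (so that the per-part contribution is distributed as $\mathcal{N}(0,\|x_{P_j}\|_2^2)$ rather than a possibly-cancelling $\pm1$ sum), and then declares a repetition ``good'' for $j^*$ if the three observed signs are \emph{all} $\sigma_{j^*,\cdot}$ or \emph{all} $-\sigma_{j^*,\cdot}$. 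If $P_{j^*}$ dominates its buckets this consistency holds with probability $\ge 3/5$; if instead $\Omega(k)$ parts are at least as large, the bucket sign is essentially independent of $\sigma_{j^*}$ and the three-way agreement happens with probability $\le 2/5$. Counting good repetitions over $O(\log(1/\delta))$ rounds and thresholding gives the decision rule. This also makes the scheme \emph{scale-invariant} and avoids your reliance on a fixed-variance dither and a grid of thresholds, which does not sit well with the model $y=\mathrm{sign}(\Phi x)$: scaling $x$ does not change $y$, so no algorithm can resolve which dither variance is ``right'' and a union over a grid would blow the output size past $O(k)$.

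Two further gaps, both of which the paper handles and your write-up does not. First, measurement count: your primitive spends one batch of measurements per part $P_j$, i.e.\ $\Theta(T)$ measurements per repetition; the paper instead hashes the $T$ parts into $\Theta(k)$ buckets à la CountSketch, so each repetition costs $\Theta(k)$ measurements regardless of $T$, and only $O(\log(1/\delta))=O(\log\log n)$ repetitions per layer are needed (since LNNT tolerates per-layer failure probability $1/\mathrm{poly}(\log n)$ and amplifies through the expander). Combined with $\Theta(\log n/\log\log n)$ layers, this is how the total comes to $O(k\log n)$; your $T\cdot R\cdot D$ accounting overshoots by polylogarithmic factors. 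Second, the paper first reduces to $k=O(\log n)$ by hashing $[n]$ into $\Theta(k/\log n)$ groups (Theorem~\ref{thm:smallsparsity} and the argument preceding it), and only then runs the ExpanderSketch; without that reduction the layer count $s=\Theta(\log n/\log\log n)$ and the $\mathrm{poly}(\log n)$-sized partitions of LNNT do not give the stated $m'=O(k\log n)$ bound. You also describe the expander as $\Gamma:[n]\times[D]\to[M]$ with $M=\Theta(k)$, which cannot be a $(k,\zeta)$ expander for non-constant $D$ (since $|\Gamma(S)|\le M$ forces $D\le M/((1-\zeta)k)$); in the paper the small $d$-regular expander lives on the $s$ layer indices, not on $[n]$, and the per-layer hashing is carried by separate functions $h_1,\dots,h_s:[n]\to[\mathrm{poly}(\log n)]$.
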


Given the above theorem we show how to prove the Theorem $1$.

\begin{proof}

We vertically concatenate the matrix $\Phi$ from Theorem \ref{onebitheavyhitters} and the matrix $G$ guaranteed by Theorem \ref{plan}. Then, we run the algorithm $\mathrm{OneBitHeavyHitters}(\mathrm{sign}(\Phi x))$ to obtain a set $S$. Then we run the following algorithm:
	\[	\hat{x} = \mathrm{argmax} \inprod{y, G_S z},s.t.~ \|z\|_1 \leq \sqrt{k}.	\]

 Last, we output $\hat{x}$. Since $ Gx = G_Sx_S + G_{[n] \setminus S}x_{[n] \setminus S}$, and 
$G_{[n] \setminus S}x_{[n] \setminus S} \sim \mathcal{N}(0, \|x_{[n] \setminus S}\|_2^2 I)$ and $\|x_{[n]\setminus S}\|_2 \leq 1$, by combining the guarantees of theorems \ref{plan} and \ref{onebitheavyhitters}  we have that 

\[ \|x-\hat{x}\|_2^2 = \|x_S - \hat{x}_S \|_2^2 + \|x_{[n] \setminus S}\|_2^2 \leq \delta + 2\|x_{-k}\|_2^2,  \]

because $\|x_{[n] \setminus S}\|_2^2 \leq \|x_{-k}\|_2^2 + \sum_{i \in \mathrm{head}(k) \setminus H(x,k) } x_i^2 \leq \|x_{-k}\|_2^2 + k \frac{1}{k}\|x_{-k}\|_2^2 = 2 \|x_{-k}\|_2^2$.
\end{proof}

\textbf{Remark}: From the discussion in this subsection, it should be clear than any algorithm that runs in linear time in $n$ and has the same guarantees as as Theorem \ref{plan} immediatelly implies, by our reduction, an algorithm that achieves $\Oh(k \mathrm{poly}(\log n))$ time. Thus, any subsequent improvement of that type over \cite{plan2013robust} gives an improvement of our main result in a black-box way.

\subsection{Reduction to small Sparsity}

The following trick is also used in \cite{larsen2016heavy}. If $k = \Omega( \log n)$, we can hash every coordinate to $\Theta(k/ \log n)$ buckets and show that it suffices to find the $\frac{1}{\log n}$-heavy hitters in every bucket separately. Here, we give a proof for completeness. First, we state the following lemma, which is proven in section 2.4.

\begin{theorem} \label{thm:smallsparsity}

Let $C',C_0$ be absolute constants and suppose that $k \leq  C' \log n$. Then there exists a randomized construction of a matrix $\Phi \in \mathbb{R}^{m'' \times n}$ with $m'' = \Oh( k \log n)$ rows, such that given $y = \mathrm{sign}(\Phi x)$, we can find, with probability $1 - n^{-C_0}$ and in time $\Oh(\mathrm{poly}(\log n))$, a set $S$ of size $\Oh(k)$ containing every $i \in H(x,k)$.

\end{theorem}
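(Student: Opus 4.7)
My strategy is to port the ExpanderSketch / partition-heavy-hitters construction of \cite{larsen2016heavy} to the one-bit measurement model, with the One-Bit PartitionPointQuery primitive from the overview as the atomic building block. First, write each index $i\in[n]$ as a sequence of $T=\Theta(\log n/\log\log n)$ chunks from an alphabet $\Sigma$ of size $\mathrm{poly}(\log n)$. Fix a bipartite vertex expander $\Gamma:[T]\times[D]\to[M]$ of constant left-degree $D$, expansion $1-\zeta$ for a small constant $\zeta$, and $M=\Theta(T)$ right vertices, so that each $v\in[M]$ has $\Oh(1)$ left neighbours on average. For each right vertex $v\in[M]$, sample an independent hash $h_v$ that maps a pair (chunk position, chunk value), restricted to the chunk positions incident to $v$, into one of $B=\Theta(k)$ cells, and attach an independent One-Bit PartitionPointQuery instance on the induced partition of $[n]$, budgeted to spend $\Oh(k)$ one-bit measurements with per-query failure probability $(\log n)^{-\omega(1)}$. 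Summing over $v$ gives $m''=\Oh(Mk)=\Oh(k\log n/\log\log n)=\Oh(k\log n)$ total measurements.

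\textbf{Decoding and error analysis.} For each right vertex $v$, I run its data structure to obtain the heavy cells of its partition and transport each heavy cell back to a (chunk position, chunk value) fragment. By the expansion property, every $i^*\in H(x,k)$ has, for each of its chunks, at least $(1-\Oh(\zeta))D$ expander neighbours reporting the correct chunk value, while collisions from other heavy hitters and from the tail corrupt only an $\Oh(\zeta)$-fraction. I then feed the collection of fragments to the graph-clustering subroutine of \cite{larsen2016heavy}, which in $\mathrm{poly}(\log n)$ time finds the $\Oh(k)$ clusters corresponding to heavy hitters and outputs each as the plurality chunk per chunk position. A union bound over the $\Oh(k)=\Oh(\log n)$ heavy hitters and the $M=\Oh(\log n)$ PartitionPointQuery instances pins the overall failure probability at $n^{-C_0}$, while the clustering step returns a set $S$ with $|S|=\Oh(k)$ by construction.

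\textbf{Main obstacle.} Modulo the choice of the One-Bit PartitionPointQuery, the whole construction is a black-box application of the ExpanderSketch machinery, so the genuinely new analytical work lives in that primitive alone (which the paper defers to Section 2.4). The difficulty is that sign quantization removes the linearity that standard CountSketch-style partition queries rely on: deciding whether a given cell is heavy must be reduced to estimating the bias of $\mathrm{sign}(\sum_i r_i x_i)$ for a carefully chosen $r$, fighting against adversarial tail contributions from the other cells. I would combine Gaussian sign measurements with an anti-concentration argument for the Gaussian inner product, amplified by the Chernoff and Bernstein inequalities from the toolkit, to drive the per-query failure probability low enough for the union bound to close. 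Everything else, including the expander construction and the clustering, can be cited as-is from \cite{larsen2016heavy}.
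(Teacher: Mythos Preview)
Your high-level plan --- swap the linear partition point-query inside the ExpanderSketch of \cite{larsen2016heavy} for a one-bit version and otherwise rerun their construction and clustering --- is exactly what the paper does; Section~2.4 says explicitly that ``the proof of the theorem is almost identical to \cite{larsen2016heavy} with a very simple modification.''

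However, your sketch of that machinery is off in ways that matter for the parameters. First, the construction does not use a bipartite vertex expander $\Gamma:[T]\times[D]\to[M]$; it uses a constant-degree $d$-regular expander $F$ on the set $[s]$ of chunk positions, together with an outer linear-time error-correcting code $\mathrm{enc}$ (Spielman), and the partition at layer $j$ is indexed by the message $m_{i,j}=h_j(i)\circ\mathrm{enc}(i)_j\circ h_{\Gamma_1(j)}(i)\circ\cdots\circ h_{\Gamma_d(j)}(i)$ --- the chunk value bundled with the hash labels of the neighbouring layers --- and it is these labels that create the edges the clustering step stitches. Second, and more seriously, your probability/measurement accounting does not close. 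The one-bit partition sketch costs $\Theta(k\log(1/\delta))$ rows for failure $\delta$, so $\Oh(k)$ rows per instance cannot buy $(\log n)^{-\omega(1)}$ failure; the paper spends $\Oh(k\log\log n)$ rows per layer for failure $1/\mathrm{poly}(\log n)$, and it is the $s=\Theta(\log n/\log\log n)$ layers that make the total $\Oh(k\log n)$. A union bound over $\Oh(\log n)$ events at $1/\mathrm{poly}(\log n)$ each yields only $1/\mathrm{poly}(\log n)$, not $n^{-C_0}$. The $n^{-C_0}$ in \cite{larsen2016heavy} comes from a different mechanism: the code-plus-expander-plus-clustering structure tolerates a \emph{constant fraction} of failed layers, and a Chernoff bound over the $s$ independent layer-failure indicators drives the probability that $\Omega(s)$ of them fail down to $(\log n)^{-\Omega(s)}=n^{-\Theta(1)}$. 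That amplification, not a union bound, is the step you must invoke.
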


Given this lemma, we show how to prove Theorem 2. This lemma is also present in \cite{larsen2016heavy}, but, for completeness, we prove it again here.

\begin{proof}
If $k < C' \log n$, we run the algorithm guaranteed by the previous lemma. Otherwise, we pick a hash function $g:[n] \rightarrow [C''k/\log n]$ and for $j \in [C''k/ \log n]$ we obtain set $S_j$ using lemma. We then output the union of all these sets. Define $z = C''k/ \log n$. We argue correctness.

For $j \in [C''k/ \log n]$ we use the Chernoff Bound to obtain that

\[ \mathbb{P} \left[  |g^{-1}(j) \cap H(x,k)| \geq \log n \right] \leq e^{-C'''\log n }.	\]

We will now invoke Bernstein's inequality for the random variables $\left\{X_i = \textbf{1}_{ g(i) = j }\right\}_{ i \in [n] \setminus H(x,k)}$; for these variables we have $K < \frac{1}{k} \|x_{-k}\|_2^2$ and

\[	\sigma^2 < \sum_{ i \in [n] \setminus H(x,k)}x_i^4 (z^{-1} - z^{-2}) \leq \frac{k}{z} \|x_{-k}\|_2^4 \sum_{i \in [n] \setminus H(x,k)} x_i^2 = \frac{k}{z} \|x_{-k}\|_2^4	\]

\[	\mathbb{P} \left[ |\sum_{i \in  g^{-1}(j) \setminus H(x,k)} x_i^2 \geq \frac{\log k}{k} \|x_{-k}\|_2^2 \right] \leq e^{-C'''\log n}.	\]

By a union-bound over all $2z = 2C''k/ \log n$ events, $2$ for every buckets $j \in [z]$, we get the proof of the lemma.

\end{proof}

Our paper now is devoted to proving Theorem \ref{thm:smallsparsity}.

\subsection{\textsc{One-BitPartitionPointQuery}}

In this section we prove the following Theorem, which is the main building block of our algorithm.


\begin{theorem} \label{partitionsketch}
Let $x \in \mathbb{R}^n$ and a partition $\mathcal{P} = \{P_1,P_2,\ldots, P_T\}$ of $[n]$. There exists an oblivious randomized construction of a matrix $Z \in \mathbb{R}^{m \times n}$ along with a procedure $\textsc{One-BitPartitionPointQuery}: [T] \rightarrow \{0,1\}$, where $m = \Oh( k \log (1/\delta) )$, such that given $y= \mathrm{sign}(Zx)$ the following holds for $j^* \in [T]$. 
\begin{enumerate}
\item If $P_{j^*}$ contains a coordinate $i \in H(x,k)$, then $\textsc{One-BitPartitionPointQuery}(j^*)=1$ with probability $1-\delta$.
\item If there exist at least $ck$ indices such that $\|x_{P_j}\|_2 \geq \|x_{P_{j^*}}\|_2$, then $\textsc{One-BitPartitionPointQuery}(j^*)=0$ with probability $1-\delta$.

\end{enumerate}
 Moreover, The running time of is $\Oh(\log (1/\delta))$.

\end{theorem}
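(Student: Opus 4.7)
The plan is to mirror the partition heavy-hitter structure of \cite{larsen2016heavy}, replacing its real-valued CountSketch buckets by bundles of dithered one-bit Gaussian measurements. Pick a pairwise independent hash $h:[T]\to[B]$ with $B=Ck$ for a sufficiently large absolute constant $C$ (so that the constant $c$ promised in the theorem is chosen small relative to $C$), and for each bucket $b \in [B]$ and each of $R = \Theta(\log(1/\delta))$ repetitions include a row $v_{b,r} \in \mathbb{R}^n$ in $Z$ whose coordinates inside $\bigcup_{j:h(j)=b} P_j$ are i.i.d.\ standard Gaussians and zero elsewhere, together with a fixed scalar dither $\tau$ (absorbed in the standard one-bit compressed sensing way by appending an always-$1$ auxiliary coordinate to $x$). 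The total number of rows is $BR = \Oh(k\log(1/\delta))$, and the query simply reads the $R$ bits of bucket $b^* = h(j^*)$, so it runs in $\Oh(\log(1/\delta))$ time.

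Conditioning on the hash, $\langle v_{b,r},x\rangle \sim \mathcal{N}(0, V_b)$ with $V_b := \sum_{j:h(j)=b} \|x_{P_j}\|_2^2$, so the measurement $y_{b,r} = \mathrm{sign}(\langle v_{b,r},x\rangle + \tau)$ is a Bernoulli of parameter $p(V_b) := \Phi(\tau/\sqrt{V_b})$, a strictly decreasing function of $V_b$. The decision rule counts the $+1$ bits in bucket $b^*$ and returns $1$ iff this count falls below a threshold corresponding to the ``heavy bucket'' regime; the Chernoff bound applied to the $R$ indicator variables shows that $R = \Theta(\log(1/\delta))$ repetitions suffice to distinguish two bias values separated by an absolute constant with failure probability at most $\delta$.

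Correctness splits into the two cases of the theorem. In case $1$, $P_{j^*}$ contains a $1/k$-heavy hitter, which forces $V_{b^*} \ge \|x_{P_{j^*}}\|_2^2 \ge \|x_{-k}\|_2^2/k$. In case $2$, there are $ck$ blocks dominating $P_{j^*}$ in $\ell_2$ norm, so being outside the top-$ck$ forces $\|x_{P_{j^*}}\|_2^2 \le \|x\|_2^2/(ck)$; pairwise independence of $h$ together with $B = Ck \gg ck$ guarantees that none of the $ck$ heavier blocks hashes to $b^*$ except with probability at most $c/C$. Conditioning on this good event, the remaining contribution to $V_{b^*}$ is a pairwise independent sum of terms each bounded by $\|x_{P_{j^*}}\|_2^2$ whose mean is at most $\|x\|_2^2/B$; a Bernstein-type tail bound (using the Bernstein inequality from the toolkit) confines $V_{b^*}$ to the ``light'' regime with failure probability $\delta$. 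Calibrating $\tau^2$ to the natural scale $\|x\|_2^2/B$ makes $p(V_b)$ differ by a positive constant between the two cases, which is exactly what the Chernoff argument needs.

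The main obstacle I anticipate is calibrating the dither $\tau$ obliviously: the natural choice $\tau \asymp \|x\|_2/\sqrt{B}$ involves $\|x\|_2$, which the sketch cannot read off from the one-bit measurements. Because $\textsc{One-BitPartitionPointQuery}$ is invoked inside the decoder for $x \in \mathcal{S}^{n-1}$, one can simply fix $\tau = \Theta(1/\sqrt{k})$; the residual technical work is to verify that the Chernoff confidence window of width $\Theta(1/\sqrt{R})$ around $p(V_b)$ is narrower than the bias gap produced above, which in turn drives the choice of the constant $C$ in $B=Ck$ and of the hidden constant in $R = \Theta(\log(1/\delta))$.
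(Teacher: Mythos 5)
Your proposal takes a genuinely different route from the paper, but it has two real gaps that I do not think you can patch without essentially reverting to the paper's construction.

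\textbf{The dither is miscalibrated.} Your test decides ``heavy'' by observing that the bit-bias $p(V_{b})=\Phi(\tau/\sqrt{V_{b}})$ drops toward $\tfrac12$ when the bucket energy $V_{b}$ exceeds $\tau^2$, and you set $\tau=\Theta(1/\sqrt{k})$ on the grounds that $\|x\|_2=1$. But the theorem's ``heavy'' event is $\|x_{P_{j^*}}\|_2^2 \ge \|x_{-k}\|_2^2/k$, and the quantity $\|x_{-k}\|_2^2$ is \emph{not} pinned down by $\|x\|_2=1$: it can be arbitrarily small. Concretely, take $k=2$, $x_1^2=1-2\epsilon$, $x_2^2=x_3^2=\epsilon$, and $P_{j^*}=\{2\}$. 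Then $\|x_{-2}\|_2^2=\epsilon$, so $2\in H(x,2)$ and the theorem requires output $1$; but $V_{b^*}\approx\epsilon$, hence $\tau/\sqrt{V_{b^*}}=\Theta(1/\sqrt{k\epsilon})\to\infty$ as $\epsilon\to 0$, so your count of $+1$ bits concentrates at $R$ and your rule outputs $0$. The fundamental reason the paper's construction escapes this is that it is scale-free: it compares the in-block signal $|\sum_{i\in P_{j^*}} g_{i,r}x_i|$ directly against the collision noise, both of which scale with $\|x_{-k}\|_2$, instead of comparing either to an absolute dither level. The random signs $\sigma_{j,B,\ell,r}$ are what make this comparison observable through the sign oracle: one tests whether the observed bit agrees with $\pm\sigma_{j^*}$ across the three sub-iterations $\ell\in[3]$, a test whose answer is invariant under rescaling $x$.

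\textbf{A single hash cannot give failure probability $\delta$.} You fix one pairwise independent hash $h$ for all $R$ repetitions and then invoke Bernstein to push $\Pr[V_{b^*}\text{ large}]$ down to $\delta$. Bernstein requires independence, which pairwise independence does not supply, and even if you upgraded to a fully independent hash, the exponent is $\min(t^2/\sigma^2,\,t/K)=\Theta(cC)$, a constant, because the threshold $t$, the block bound $K$, and the variance $\sigma^2$ all scale together with $\|x\|_2^2/k$ once $B=Ck$. So a single hash yields only a constant-probability guarantee. The $\delta$ must come from the repetitions, which means you need fresh hashes $h_{r,\ell}$ in each of the $R=\Theta(\log(1/\delta))$ rounds and a final Chernoff bound on the number of ``good'' rounds, which is exactly what the paper does with its indices $r\in[C_3\log(1/\delta)]$ and $\ell\in[3]$.

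A further, smaller point: appending a constant coordinate to carry the dither $\tau$ changes the measurement model. The theorem statement, and the downstream use inside Theorems~\ref{onebitheavyhitters} and~\ref{thm:smallsparsity}, is strictly $y=\mathrm{sign}(Zx)$ with no dither; the paper's sign-matching test is built to live inside that scale-invariant model, and yours is not.
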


We describe the construction of the the matrix $Z$. We are going to describe the matrix as a set of linear measurements on the vector $x$. For $i \in [n], j \in [T], B \in [C_Bk], \ell \in [3], r \in [C_3 \log (1/\delta)]$ we pick the following random variables:
\begin{enumerate}
\item fully independent hash functions $h_{r,\ell}: [T] \rightarrow [C_Bk]$.
\item random signs $\sigma_{j,B,\ell,r}$. Intuitively, one can think of this random variable as the sign assigned to set $P_j$ in bucket $B$ of sub-iteration $\ell$ of iteration $r$. 
\item normal random variables $g_{i,r}$. One can think of this random variable as the gaussian associated with $i$ in iteration $r$.
\end{enumerate}

Then, for every $B \in [C_Bk], \ell \in [3], r \in [C_B \log (1/\delta)]$  we perform linear measurements

\[	z_{B,\ell,r} = \sum_{j \in h_{r,\ell}^{-1}(B) } \sigma_{j,B,\ell,r} \sum_{ i \in P_j} g_{i,r} x_i,		\]
as well as measurements $-z_{B,\ell,r}$ (the reason why we need this will become clear later).

Of course we have access only to the sign of the measurement: $y_{B,\ell,r} = \mathrm{sign}(z_{B,\ell,r})$. We slightly abuse notation here, as $y$ is described as a $3$-dimensional vector; it is straightforward to see how this vector can be mapped to a $1$-dimensional vector.

We will make use of the following lemmata. The value $C_B$ is a large enough constant, chosen in order for the analysis to work out. Before proceeding with the lemmas, we pick constants $C_u,C_d$ such that
\begin{enumerate}
\item $\mathbb{P}_{Y \sim \mathcal{N}(0,1)}[|Y| < C_u] = \frac{19}{20}$.
\item $\mathbb{P}_{Y \sim \mathcal{N}(0,1)}[|Y| > C_d] = \frac{19}{20}$.
\end{enumerate}

\begin{lemma}\label{positives}
Fix $i^* \in H(x,k)$, $j^*$ such that $i^* \in P_{j^*}$, as well as $r \in [C_3 \log (1/\delta)]$. We also set $B_\ell = h_{r,\ell}(j^*)$. Then, with probability at least $\frac{3}{5}$  we have that for all $ \ell \in [3]$ either

\[	y_{B_\ell,\ell,r} = \sigma_{j^*,B_\ell,\ell,r} ~\mathrm{or}~y_{B_\ell,\ell,r} = -\sigma_{j^*,B_\ell,\ell,r}.	\]

\end{lemma}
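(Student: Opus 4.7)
The plan is to exploit the structural fact that the ``signal'' contribution from $P_{j^*}$ is identical across all three sub-iterations $\ell \in \{1,2,3\}$. I would decompose
\[ z_{B_\ell,\ell,r} \;=\; \sigma_{j^*,B_\ell,\ell,r}\, A \;+\; N_\ell, \]
where $A := \sum_{i \in P_{j^*}} g_{i,r} x_i$ depends only on $r$ (not on $\ell$), and $N_\ell := \sum_{j \in h_{r,\ell}^{-1}(B_\ell)\setminus\{j^*\}} \sigma_{j,B_\ell,\ell,r} \sum_{i \in P_j} g_{i,r} x_i$ is the collision noise. If I can show $|N_\ell| < |A|$ for each $\ell$, then $y_{B_\ell,\ell,r} = \sigma_{j^*,B_\ell,\ell,r}\cdot \mathrm{sign}(A)$ for every $\ell$, and since $\mathrm{sign}(A)$ is the same across $\ell$, all three equal $+\sigma_{j^*,B_\ell,\ell,r}$ (if $A>0$) or all three equal $-\sigma_{j^*,B_\ell,\ell,r}$ (if $A<0$), which is exactly the conclusion.

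I would first lower-bound $|A|$. Since the $g_{i,r}$ are i.i.d.\ standard Gaussians, $A \sim \mathcal{N}(0,\|x_{P_{j^*}}\|_2^2)$, and because $i^* \in H(x,k)\cap P_{j^*}$ we have $\|x_{P_{j^*}}\|_2^2 \ge x_{i^*}^2 \ge \|x_{-k}\|_2^2/k$. By the choice of $C_d$, $|A|\ge C_d\|x_{P_{j^*}}\|_2$ with probability $19/20$. Next I would bound $|N_\ell|$. Let $J:=\{j: P_j\cap H(x,k)\ne\emptyset\}$; a standard counting argument ($|\mathrm{head}(k)|=k$, and each tail coordinate in $H(x,k)$ contributes $\ge\|x_{-k}\|_2^2/k$ to $\|x_{-k}\|_2^2$) gives $|J|\le 2k$. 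By the union bound, the event $\mathcal{E}_\ell^{col}$ that some $j\in J\setminus\{j^*\}$ hashes into $B_\ell$ has probability at most $2/C_B$. On the complement, the noise variance
\[ V_\ell \;:=\; \sum_{j \neq j^*:\, h_{r,\ell}(j)=B_\ell} \|x_{P_j}\|_2^2 \]
draws only from $j$ with $P_j\subseteq [n]\setminus H(x,k)$, so $\mathbb{E}[V_\ell]\le \|x_{[n]\setminus H(x,k)}\|_2^2/(C_Bk)\le 2\|x_{-k}\|_2^2/(C_Bk) \le (2/C_B)\|x_{P_{j^*}}\|_2^2$ (the middle inequality uses $\|x_{\mathrm{head}(k)\setminus H(x,k)}\|_2^2 \le \|x_{-k}\|_2^2$). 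Markov then gives $V_\ell=O(\|x_{P_{j^*}}\|_2^2/C_B)$ with probability $\ge 39/40$. Since the $P_j$'s are disjoint, $A$ is independent of the $W_j := \sum_{i\in P_j} g_{i,r} x_i$ for $j\ne j^*$; conditional on the hash and on $V_\ell$, the independent random signs $\sigma_{j,B_\ell,\ell,r}$ let us absorb signs into the Gaussian $W_j$'s, so $N_\ell\mid h_{r,\ell}\sim \mathcal{N}(0,V_\ell)$. By the choice of $C_u$, $|N_\ell|\le C_u\sqrt{V_\ell}=O(1/\sqrt{C_B})\cdot \|x_{P_{j^*}}\|_2$ with probability $19/20$.

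Choosing $C_B$ large enough that $C_u\cdot O(1/\sqrt{C_B})<C_d$ then forces $|N_\ell|<|A|$ for each $\ell$, which is the condition needed for the initial decomposition argument. A final union bound over the $|A|$-event and, for each of the three values of $\ell$, the three failure modes (collision with $J$, $V_\ell$ too large, Gaussian tail for $N_\ell$) absorbs a total failure probability of at most $1/20 + 3(2/C_B + 1/40 + 1/20)$, which is below $2/5$ for $C_B$ sufficiently large. The main obstacle is bookkeeping the probability budget while handling the fact that the three sub-iterations share both the signal $A$ and the Gaussians $g_{i,r}$: sharing $A$ is precisely what yields the consistent sign across $\ell$, while the partition structure is exactly what makes $A$ independent of the noise $N_\ell$ and thus allows the per-$\ell$ noise analysis to proceed without cross-contamination.
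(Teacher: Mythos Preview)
Your proposal is correct and follows essentially the same approach as the paper: the same signal/noise decomposition $z_{B_\ell,\ell,r}=\sigma_{j^*,B_\ell,\ell,r}A+N_\ell$, the same Gaussian lower bound on $|A|$ via $C_d$, the same treatment of $N_\ell$ by first excluding collisions with the at-most-$2k$ ``bad'' parts, then Markov on the residual variance, then the Gaussian tail via $C_u$, followed by a union bound over $\ell\in[3]$. If anything, your probability bookkeeping is more explicit than the paper's (you separately tally the collision, Markov, and Gaussian-tail failures for each $\ell$), and your observation that $\|x_{[n]\setminus H(x,k)}\|_2^2\le 2\|x_{-k}\|_2^2$ patches a small constant the paper glosses over.
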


\begin{proof}


For the need of the proof we define $\mathcal{B}^{-1}_\ell = h_{r,\ell}^{-1}(h_{r,\ell}(j^*))$. First, observe that for all $\ell \in [3]$ that the random variable

\[ Y_\ell = \sum_{j \in \mathcal{B}^{-1}_\ell \setminus \{j^*\} } \sigma_{j,B_\ell,\ell,r} \sum_{ i \in P_j} g_{i,r} x_i \]

is distributed as \[ \sqrt{ \left(\sum_{j \in \mathcal{B}^{-1}_l \setminus \{j^*\}}\sum_{i \in P_j} x_i^2 \right) } \cdot \mathcal{N}(0,1). \]

Observe that with probability at least $\frac{19}{20}$, $|Y_\ell|$ will be at most 
\[ C_{u}\sqrt{ \sum_{j \in \mathcal{B}^{-1}_\ell \setminus \{j^*\}}\sum_{i \in P_j} x_i^2 }. \]

Define  \[ Z_\ell = \sum_{j \in \mathcal{B}^{-1}_\ell \setminus \{j^*\}}\sum_{i \in P_j}x_i^2.\] 

Consider now the set $\mathcal{P}_{\mathrm{bad}}$ of $P_j$, $j \in [T] \setminus\{j^*\}$ for which there exists $i \in H(x,k)$ such that $i \in P_j$. Since there are at most $2k$ elements in $\mathcal{P}_{\mathrm{bad}}$, with probability at least $1-\frac{2}{C_B}$ it holds that $\mathcal{B}^{-1}_{\ell} \cap \mathcal{P}_{\mathrm{bad}} = \emptyset$. Let this event be $\mathcal{W}$. It is a standard calculation that $\mathbb{E}[Z_l|\mathcal{W}] \leq \frac{1}{C_Bk}\|x_{-k}\|_2^2$. Invoking Markov's inequality one gets that $Z_l$ is at most $\frac{20}{C_B k}\|x_{-k}\|_2^2$ with probability at least $\frac{19}{20}$.
Putting everything together, this gives that 
	\[|Y_\ell| > C_{u} \sqrt{\frac{20}{C_B k}} \|x_{-k}\|_2\]
 with probability $\frac{1}{20}$. The probability that there exist $l \in [3]$ such that $|Y_\ell| > C_{u} \sqrt{\frac{20}{C_B k}} \|x_{-k}\|_2$ is at most $\frac{3}{20}$. We now observe that the

\[ |\sum_{i \in P_{j^*}} g_{i,r} x_i| \geq C_d \|x_{P_{j^*}}\|_2\geq C_d\frac{1}{\sqrt{k}} \|x_{-k}\|_2\] with probability at least $\frac{19}{20}$. The above discussion implies that with probability at least $\frac{15}{20}$ the quantity $|\sum_{i \in P_{j^*}} g_{i,r} x_i|$ is larger than $|Y_\ell|$, for all $l \in [3]$, if $C_d/ \sqrt{k} > C_{u} \sqrt{\frac{20}{C_B k}}$. This means that, with probability at least $\frac{3}{4}$, the sign of $z_{B_\ell,\ell,r}$ will be determined by the sign of $\sigma_{j^*,B_\ell,\ell,r} \sum_{ i \in P_{j^*}} g_{i,r} x_i$ for all $\ell\in[3]$. This implies that if $\sum_{i \in P_{j^*}}g_{i,r} x_i > 0$, we will get that $y_{B_\ell,\ell,r} = \sigma_{j^*,B_\ell,\ell,r}$. On the other hand, if $\sum_{i \in P_{j^*}} g_{i,r} x_i< 0$ then $y_{B_\ell,\ell,r} = -\sigma_{j^*,B_\ell,\ell,r}$. This gives the proof of the lemma.

\end{proof}

\begin{lemma}\label{negatives}
Let $j^*$ such that $ \|x_{P_{j^*}}\|_2 > 0$. We also define $B_\ell =h_{r,\ell}(j^*)$. Assume that there exist at least $ck$ indices $j$ such that $\|x_{P_{j}}\|_2 \geq \|x_{P_{j^*}}\|_2$, for some absolute constant $c$. Then, with probability $\frac{3}{5}$, there exists indices $\ell_1,\ell_2 \in [3]$ such that

\[	y_{B_{\ell_1},\ell_1,r} = \sigma_{j^*,B_{\ell_1},\ell_1,r} ~\mathrm{and}~y_{B_{\ell_2},\ell_2,r} = -\sigma_{j^*,B_{\ell_2},\ell_2,r}.	\]

\end{lemma}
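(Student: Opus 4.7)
My plan is to mirror the proof of Lemma~\ref{positives} but to exploit the \emph{opposite} regime, where the ``noise'' term overwhelms the ``signal'' term. Exactly as there, write
\[
z_{B_\ell,\ell,r} \;=\; \sigma_{j^*,B_\ell,\ell,r}\,A + D_\ell,\qquad A:=\sum_{i\in P_{j^*}} g_{i,r}\,x_i,\qquad D_\ell:=\sum_{j\in \mathcal{B}^{-1}_\ell} \sigma_{j,B_\ell,\ell,r}\!\!\sum_{i\in P_j}\!\! g_{i,r}\,x_i,
\]
where $\mathcal{B}^{-1}_\ell := h_{r,\ell}^{-1}(B_\ell)\setminus\{j^*\}$, and introduce the indicator $\xi_\ell := \sigma_{j^*,B_\ell,\ell,r}\cdot y_{B_\ell,\ell,r}\in\{\pm 1\}$; the conclusion of the lemma is precisely that the triple $(\xi_1,\xi_2,\xi_3)$ is non-constant. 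A direct case analysis shows that on the event $E_\ell:=\{|D_\ell|>|A|\}$ one has $\xi_\ell = \sigma_{j^*,B_\ell,\ell,r}\cdot\mathrm{sign}(D_\ell)$, while on $E_\ell^c$ one has $\xi_\ell=\mathrm{sign}(A)$.

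The first step is to quantify $P(E_\ell^c)$. Set $\mathcal{L}:=\{j\neq j^*:\|x_{P_j}\|_2\geq\|x_{P_{j^*}}\|_2\}$, which by hypothesis satisfies $|\mathcal{L}|\geq ck$. The number of elements of $\mathcal{L}$ that $h_{r,\ell}$ sends to $B_\ell$ is a sum of Bernoulli$(1/(C_Bk))$ random variables of mean $\geq c/C_B$, so by a Chernoff bound, with probability $1-e^{-\Omega(c/C_B)}$ one has $V_\ell := \sum_{j\in\mathcal{B}^{-1}_\ell}\|x_{P_j}\|_2^2 \geq M\,\|x_{P_{j^*}}\|_2^2$ for $M:=c/(2C_B)$, which can be made arbitrarily large by the choice of $c$. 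Because $\sigma G$ has the same law as $G$ whenever $G$ is Gaussian and $\sigma$ an independent random sign, marginalization over the $\sigma$'s and over the $g_{i,r}$ with $i\notin P_{j^*}$ shows that $D_\ell\mid h$ is centered Gaussian of variance $V_\ell$, independent of $A$ since the partition cells are disjoint. Comparing two such independent Gaussians yields $P(|D_\ell|\leq|A|) = O(1/\sqrt{M})$, so for $c$ sufficiently large $P(E_\ell^c)\leq \eta$ for any prescribed constant $\eta>0$.

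The second step is a conditional-independence observation: conditional on $(g_{\bullet,r},h)$, the three variables $D_\ell$ depend on pairwise disjoint tuples of signs $\{\sigma_{j,B_\ell,\ell,r}\}_{j\in\mathcal{B}^{-1}_\ell}$ (the $\ell$-coordinate distinguishes them), and the three sign variables $\sigma_{j^*,B_\ell,\ell,r}$ are independent of these and of each other. Consequently the triple $(\xi_1,\xi_2,\xi_3)$ is conditionally mutually independent given $(g,h)$. The case analysis gives $p_\ell := P(\xi_\ell=+1\mid g,h) = \tfrac12 + \tfrac12\,\mathrm{sign}(A)\,P(E_\ell^c\mid g,h)$, so $|p_\ell-\tfrac12|\leq\tfrac12 P(E_\ell^c\mid g,h)$, and expanding the resulting product yields
\[
P(\xi_1=\xi_2=\xi_3\mid g,h) \;=\; \tfrac14 + \!\!\!\sum_{1\leq\ell<\ell'\leq 3}\!\!\!(p_\ell-\tfrac12)(p_{\ell'}-\tfrac12).
\]
Taking expectations, the elementary inequality $|xy|\leq(x^2+y^2)/2$, and $E[P(E_\ell^c\mid g,h)^2]\leq P(E_\ell^c)\leq\eta$ then give $P(\xi_1=\xi_2=\xi_3)\leq \tfrac14+\tfrac34\eta \leq 2/5$ provided $c$ is taken large enough that $\eta\leq 1/5$, which is the claimed bound $P(\text{not all }\xi_\ell\text{ equal})\geq 3/5$.

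The hard part is the conditional-independence step. A priori, $A$ is shared across all three sub-iterations, and the $D_\ell$ can reuse the same underlying Gaussians $g_{i,r}$ whenever a non-$j^*$ partition cell is hashed to several of the three buckets, so the triple $(D_1,D_2,D_3)$ is not jointly Gaussian and one cannot simply invoke independence of the events $E_\ell$ directly. What saves us is that the sign variables $\sigma_{j,B,\ell,r}$ are indexed by $\ell$, so conditional on the Gaussians the three $D_\ell$'s are built from entirely disjoint sign-tuples and are therefore mutually independent, which is exactly the structural property the argument needs to promote three almost-unbiased coins into the desired constant-probability disagreement.
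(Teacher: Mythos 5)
Your proof is correct and follows the same conceptual arc as the paper's: show that a large number of competing partition cells collide with $j^*$, so the ``noise'' term dominates the ``signal'' term $A=\sum_{i\in P_{j^*}}g_{i,r}x_i$ with probability close to $1$; once the noise dominates, the random signs make $y_{B_\ell,\ell,r}$ nearly unbiased and independent of $\sigma_{j^*,B_\ell,\ell,r}$, and three nearly-fair, independent coins cannot all agree with probability much above $1/4$. Where you genuinely depart from the paper is in the accounting at the end. The paper establishes, for a fixed $\ell$, that the noise exceeds the signal with probability roughly $17/20$ by chasing the explicit constants $C_u,C_d$, and then asserts a $3/20 + 2/8 \leq 2/5$ failure bound; its constants do not cleanly check out (e.g.\ $P[|Z|\geq 1/C_d]$ is nowhere near $19/20$ with the stated definition of $C_d$), and no explicit union bound over the three values of $\ell$ is taken. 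Your route instead packages the outcome into the variables $\xi_\ell=\sigma_{j^*,B_\ell,\ell,r}\,y_{B_\ell,\ell,r}$, observes the conditional mutual independence of $(\xi_1,\xi_2,\xi_3)$ given $(g,h)$ (a point the paper leaves implicit, and which does require noting that the $\sigma$'s carry the index $\ell$), derives $p_\ell=\tfrac12+\tfrac12\,\mathrm{sign}(A)\,P(E_\ell^c\mid g,h)$ exactly, and uses the clean identity
\[
P(\xi_1=\xi_2=\xi_3\mid g,h)=\tfrac14+\sum_{\ell<\ell'}(p_\ell-\tfrac12)(p_{\ell'}-\tfrac12),
\]
together with the Cauchy-ratio bound $P(|D_\ell|\leq|A|)=O(1/\sqrt{M})$ to control $P(E_\ell^c)$ by a prescribed $\eta$. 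This gives $P(\xi_1=\xi_2=\xi_3)\leq\tfrac14+\tfrac34\eta\leq 2/5$ cleanly for $c$ large. Both proofs require $c$ to be a sufficiently large absolute constant; your version is tighter, makes the independence mechanism explicit, and avoids the paper's shaky arithmetic, so it buys a more transparent and verifiable derivation at the cost of slightly more notation.
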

\begin{proof}
For the need of the proof we also define $\mathcal{B}^{-1}_\ell = h^{-1}_{r,\ell}(h_{r,\ell}(j^*))$. Fix $\ell \in [3]$.
Let $\mathcal{P}_{\mathrm{good}}$ be the set of indices $j \in [T]$ such that $\|x_{P_j}\|_2 \geq \|x_{P_{j^*}}\|_2$. Let the random variable $Z_\ell$ be defined as 
\[	Z_\ell = |\{ j \in \mathcal{P}_{\mathrm{good}}  \setminus \{j^*\}: j \in \mathcal{B}^{-1}_\ell \}|.	\]

Observe now that $\mathbb{E}[Z_\ell] = \frac{ck}{C_Bk} = \frac{c}{C_B}$ and moreover $Z_l$ is a sum of independent Bernoulli random variables with mean $\frac{1}{C_Bk}$, hence a standard concetration bound gives that, for $c$ large enough, $Z_\ell$ will be larger than $4C_d^2 C_u^2$ with probability $\frac{19}{20}$. This implies that 

\[	\sum_{j \in \mathcal{B}_\ell^{-1}\setminus\{j^*\}} \|x_{P_j}\|_2^2 \geq 4C_d^2 C_u^2 \|x_{P_{j^*}}\|_2^2.		\]

for all $\ell \in [3]$. This implies that, for any $\lambda \in \mathbb{R}$,

\[	\mathbb{P} \left[ |\sum_{j \in \mathcal{B}_\ell^{-1}\setminus \{j^*\} } \sigma_{j,B_\ell} \sum_{ i \in P_j} g_{i,r} x_i| \geq \lambda \right] \geq\]
\[ \mathbb{P} \left[ 2C_d C_u \|x_{P_{j^*}}\|_2 \cdot |\mathcal{N}(0,1)| \geq \lambda  \right]. \]

The above implies that 
\[\mathbb{P} \left[ |\sum_{j \in B_l \setminus \{j^*\} } \sigma_{j,B_\ell,\ell,r} \sum_{ i \in P_j} g_{i,r} x_i| \geq 2C_u \|x_{P_{j^*}}\|_2 \right] \geq \frac{19}{20} \]

and moreover 

\[	\mathbb{P} \left[ |\sum_{i \in \mathcal{P}_{j^*}} g_{i,r} x_i | \leq C_u \|x_{P_{j^*}}\|_2 \right] \geq \frac{19}{20},\]

which implies that with probability $\frac{17}{20}$ we have that

\[	|\sum_{j \in \mathcal{B}^{-1}_\ell \setminus \{j^*\} } \sigma_{j,\mathcal{B}_\ell^{-1},l,r} \sum_{ i \in P_j} g_{i,r} x_i|\leq 2|\sum_{i \in \mathcal{P}_{j^*}} g_{i,r} x_i |.		\]

Observe now that $y_{B_\ell,\ell,r}$ is the same as the sign of

$\sum_{j \in B_\ell \setminus \{j^*\} } \sigma_{j,\mathcal{B}_\ell^{-1},l,r} \sum_{ i \in P_j} g_{i,r} x_i$, which, because of the random signs, means that 

\[	\mathbb{P} \left[ y_{B_\ell,\ell,r} = 1\right]	= \frac{1}{2}. \] 

Moreover, we get that $y_{B_\ell,\ell,r}$ and $\sigma_{j^*,B_\ell,\ell,r}$ are independent. Conditioned on the previous events, the probability that either 

\[ y_{B_\ell,\ell,r}= \sigma_{j^*,B_\ell,\ell,r} \]	
 
for all $\ell \in [3]$, or 

\[ y_{B_\ell,\ell,r}= -\sigma_{j^*,B_\ell,\ell,r} \]	

for all $\ell\in[3]$, is $\frac{2}{8}$. This gives the proof of the claim since $\frac{3}{20} + \frac{2}{8} \leq \frac{8}{20} = \frac{2}{5}.$ 

\end{proof}

We are now ready to proceed with the proof of Theorem \ref{partitionsketch}.

\begin{proof}

We iterate over all $r \in [C_3 \log(1/\delta)]$ and count the number of ``good'' repetitions: a repetition $r$ is good if for all $\ell\in[3]$, $y_{h_{r,\ell}(j^*),\ell,r} = \sigma_{j,h_{r,\ell}(j^*),\ell,r} $ or $y_{h_{r,\ell}(j^*),\ell,r} = -\sigma_{j,h_{r,\ell}(j^*),\ell,r}$. We also check if there exists $l\in[3]$ such that $y_{h{r,\ell}(j^*),\ell,r} = 0$ by checking the values of $y_{h{r,\ell}(j^*),\ell,r} = 0$ and $-y_{h{r,\ell}(j^*),\ell,r} = 0$. If there exists no such $\ell$ and the number of good repetitions is at least $\lceil \frac{1}{2} C_3 \log(T/\delta) \rceil + 1$ we output $1$, otherwise we output $0$. \newline 
We proceed with the analysis. First of all, if there exists an $\ell\in[3]$ that satisfies $y_{h{r,\ell}(j^*),\ell,r} = 0$, this would mean that $\|x_{P_{j^*}}\| = 0$. Let us assume that this is not the case, otherwise we can ignore $j^*$. If $i^* \in H(x,k)$ belongs to $P_{j^*}$, for some $j^*$, using Lemma 
\ref{positives} the expected number of good iterations equals 
$(3/5)C_3 \log(1/\delta)$,and by a Chernoff Bound we get that at least 
$(2.6/3) \cdot(3/5)C_3 \log(1/\delta) = (2.6/5) C_3 \log(1/\delta) $ repetitions will be good with probability 
\[ 1 - e^{-\Omega( \log( |T|/\delta))} \geq 1- \delta,\] for large enough $C_3$. In the same way, using Lemma \ref{negatives} we can bound by $\delta$ the probability that a set $P_{j^*}$, for which there exist at least $ck$ set $P_j$ with $\|x_{P_j}\|_2 \geq \|x_{P_{j^*}}\|_2$, has more than $\lceil \frac{1}{2} C_3 \log(T/\delta) \rceil - 1$ good repetitions. This concludes the proof of the lema.
 
\end{proof}

The following lemma is immediate by taking $\delta = T^{-C_0-1}$ and taking a union-bound
over all $j \in [T]$.

\begin{lemma}[\textsc{One-BitPartitionCountSketch}]\label{partitioncountsketch}
Let $x \in \mathbb{R}^n$ and a partition $\mathcal{P} = \{P_1,P_2,\ldots, P_T\}$ of $[n]$. There exists a randomized construction of a matrix $Z \in \mathbb{R}^{m \times n}$, such that given $y=\mathrm{sign}(Zx)$, we can find in time $\Oh(k \log T)$ a set $S$ of size $\Oh(k)$ that satisfies contains every $j \in [T]$ for which there exists $i \in H_k(x) \cap P_j$. Moreover, the failure probability is $T^{-C_0}$.
\end{lemma}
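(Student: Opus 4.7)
The plan is to derive this lemma directly from Theorem \ref{partitionsketch} by choosing the per-query failure probability $\delta$ small enough that a union bound over all $T$ partition elements leaves only $T^{-C_0}$ slack. Concretely, I would invoke Theorem \ref{partitionsketch} with $\delta = T^{-(C_0+1)}$. This yields a matrix $Z$ with $m = \Oh(k \log(1/\delta)) = \Oh((C_0+1) k \log T) = \Oh(k \log T)$ rows, matching the sketch size in the lemma.

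For decoding, I would run \textsc{One-BitPartitionPointQuery}$(j)$ for every $j \in [T]$ and collect into $S$ exactly those indices for which the procedure returns $1$. Each query takes $\Oh(\log(1/\delta)) = \Oh(\log T)$ time by Theorem \ref{partitionsketch}, so the overall decode is a simple scan with $\log T$ work per element, compatible with the time bound stated. By a union bound, with probability at least $1 - T \cdot T^{-(C_0+1)} = 1 - T^{-C_0}$, both guarantees (1) and (2) of Theorem \ref{partitionsketch} hold simultaneously for every $j^* \in [T]$; I condition on this event for the remainder.

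Correctness then splits into two checks. For coverage, property (1) says that whenever $P_{j^*}$ contains an index from $H(x,k)$, the query returns $1$, hence every such $j^*$ lies in $S$. For the size bound, suppose for contradiction that $|S| > ck$, and let $j^* \in S$ be the element whose value $\|x_{P_{j^*}}\|_2$ is the $(ck+1)$-th largest among $\{\|x_{P_j}\|_2 : j \in S\}$. Then there are at least $ck$ indices $j \neq j^*$ with $\|x_{P_j}\|_2 \geq \|x_{P_{j^*}}\|_2$, so property (2) forces \textsc{One-BitPartitionPointQuery}$(j^*)$ to return $0$, contradicting $j^* \in S$. Thus $|S| \leq ck = \Oh(k)$.

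I do not foresee any serious obstacle: the substantial technical content—establishing the two-sided guarantee of the partition point-query sketch—is entirely absorbed by Theorem \ref{partitionsketch}. The only things to check are the bookkeeping in the union bound and the pigeonhole-type argument for the output size, both of which are routine. The mildest care point is aligning the stated decoding time with the natural $\Oh(T \log T)$ scan, which matches the lemma's claim up to the implicit $\log T$ per-query factor arising from the chosen $\delta$.
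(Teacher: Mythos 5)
Your proposal reproduces the paper's own (one-line) proof exactly: the paper says the lemma is ``immediate by taking $\delta = T^{-C_0-1}$ and taking a union-bound over all $j \in [T]$,'' and you fill in precisely those details, including the pigeonhole argument that bounds $|S|$ by $ck$ using guarantee (2) of Theorem~\ref{partitionsketch}. So the approach is correct and the same.

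One remark on your closing ``mildest care point,'' which is actually a genuine mismatch you have slightly mischaracterized. Querying every $j \in [T]$ costs $\Oh(T \log T)$, whereas the lemma claims $\Oh(k \log T)$; the gap is a multiplicative $T/k$ factor, not a ``$\log T$ per-query factor.'' Your plain scan cannot meet the stated $\Oh(k \log T)$ bound when $T \gg k$ — one would need some way to restrict attention to $\Oh(k)$ candidate partitions before querying, and the present sketch provides no such mechanism. This discrepancy is inherited from the paper, whose terse proof does not discuss decoding time at all; it is harmless in the downstream applications here (in the $b$-tree, only $\Oh(bk)$ candidate sets are queried per level, and in the ExpanderSketch, $T = \mathrm{poly}(\log n)$ while $k = \Oh(\log n)$, so $T \log T$ is still $\mathrm{poly}(\log n)$), but as literally stated the time bound in Lemma~\ref{partitioncountsketch} does not follow from the argument given.
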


\subsection{\textsc{One-Bit} $b$-tree}

We now describe the scheme of \textsc{One-Bit} b-tree. The $b$-tree is a folkore data structure in streaming algorithms, first appearing in \cite{cormode2008finding} in the case of vectors with positive coordinates. The version of the $b$-tree we are using here is more closely related in \cite{larsen2016heavy}. We remind the reader that the aforementioned papers treated the case where we have access to $\Phi x$ and not only to $\mathrm{sign}(\Phi x)$. 
Here,  we describe it a sensing matrix associated with a decoding procedure, rather than a data structure. Given the $b$-tree, we can find elements $i \in H(x,k)$ and get an analog of Theorem $1$; however, this would only give $1/\mathrm{poly}(\log n)$ failure probability. Getting $1/\mathrm{poly}(n)$ failure probability requires using the \textsc{ExpanderSketch} algorithm of \cite{larsen2016heavy}In fact, we can use the \textsc{One-Bit} $b$-tree to speed up the \textsc{One-Bit} \textsc{ExpanderSketch} decoding procedure, but since our overall scheme  already has a polynomial dependence on $k$ in the running time due to the application of Theorem, this will not give us any crucial improvement. However, we believe that it might of independent interest in the sparse recovery community.

The following lemma holds.

\begin{lemma}\label{btree}

Let $k,b<n$ be integers. There exists a randomized construction of a matrix $A \in \mathbb{R}^{M \times n}$ such that given $y = \mathrm{sign}( A x)$ we can find a set $S$ of size $\Oh(k)$ such that $\forall i \in H(x,k), i \in S$. The total number of measurements equals

\[		M = \Oh( k \frac{\log(n/k)}{\log b} (\log(k/\delta) + \log\log(n/k) - \log \log b ))	\]

the decoding time is

\[ 	\Oh( b k \frac{\log(n/k)}{\log b} (\log(k/\delta) + \log\log(n/k) - \log \log b ))	\]

and the failure probability is $\delta$.

\end{lemma}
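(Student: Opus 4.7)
The plan is to instantiate the classical dyadic $b$-tree scheme for heavy hitters (see \cite{cormode2008finding, larsen2016heavy}) in the one-bit model, using \textsc{One-BitPartitionCountSketch} (Lemma~\ref{partitioncountsketch}) as the per-level primitive. I would organize $[n]$ as the leaf set of a $b$-ary tree so that at depth $\ell$ the nodes induce a partition $\mathcal{P}_\ell$ of $[n]$ into $b^\ell$ roughly equal-sized blocks, using $L = \lceil \log_b(n/k)\rceil$ levels (plus, if needed, one extra refinement step at negligible cost to reduce surviving blocks to singletons). For each $\ell \in \{1,\ldots,L\}$ I would independently build a one-bit partition count sketch $Z_\ell$ for $\mathcal{P}_\ell$, whose per-point-query failure parameter $\delta_\ell$ is tuned so that a union bound over the $\Oh(bkL)$ queries made during decoding totals at most $\delta$; the matrix $A$ of the lemma is the vertical stack of the $Z_\ell$.

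Decoding proceeds top-down. I maintain a set $\mathcal{S}_\ell$ of surviving buckets, with $\mathcal{S}_0 = \{[n]\}$. To produce $\mathcal{S}_\ell$ I enumerate the (at most $b|\mathcal{S}_{\ell-1}|$) children of buckets in $\mathcal{S}_{\ell-1}$, call \textsc{One-BitPartitionPointQuery} with sketch $Z_\ell$ on each of them, and retain those for which the routine answers $1$. Two invariants, proved by induction on $\ell$ via the two halves of Theorem~\ref{partitionsketch}: (i) every bucket of $\mathcal{P}_\ell$ whose support intersects $H(x,k)$ survives, by guarantee~(1); and (ii) $|\mathcal{S}_\ell| = \Oh(k)$, because by guarantee~(2) any bucket that is not among the $\Oh(k)$ heaviest in $\|x_{P_j}\|_2$ is discarded. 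Both facts hold pointwise with failure probability $\delta_\ell$, and the output is the union of surviving leaf buckets, which has size $\Oh(k)$ and contains every $i \in H(x,k)$ with overall failure at most $\delta$.

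The main obstacle, and therefore the place to spend the most care, is the bookkeeping of the union bound. There are $\Oh(bk)$ point-query calls per level across $L = \Theta(\log(n/k)/\log b)$ levels, so setting $\delta_\ell = \Theta(\delta/(bkL))$ gives $\log(1/\delta_\ell) = \Theta(\log(k/\delta) + \log\log(n/k) - \log\log b)$, and Theorem~\ref{partitionsketch} then yields $\Oh(k \log(1/\delta_\ell))$ rows in each $Z_\ell$. Stacking over $L$ levels gives the claimed measurement bound
\[
M \;=\; \Oh\!\left( k \cdot \frac{\log(n/k)}{\log b} \cdot \bigl(\log(k/\delta) + \log\log(n/k) - \log \log b\bigr) \right).
\]
The decoding time is the sum over levels of (number of point queries) $\times$ (cost per point query), namely $\Oh(bk \cdot \log(1/\delta_\ell))$ per level and $L$ levels in total, giving the stated total. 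Everything else is routine top-down tree traversal, exactly analogous to the analyses in \cite{cormode2008finding, larsen2016heavy}.
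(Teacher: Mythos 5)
Your construction and decoding are essentially the same as the paper's: a $b$-ary refinement scheme where each level is implemented with \textsc{One-BitPartitionCountSketch}/\textsc{One-BitPartitionPointQuery}, the per-query failure probability is set to $\Theta(\delta/(bkL))$ so that a union bound over the $\Oh(bkL)$ queries succeeds, and the measurement and time bounds follow by summing $\Oh(k\log(1/\delta_\ell))$ (resp.\ $\Oh(bk\log(1/\delta_\ell))$) over the $L = \Theta(\log(n/k)/\log b)$ levels. The two-invariant induction, using guarantee~(1) of Theorem~\ref{partitionsketch} to keep every heavy bucket and guarantee~(2) to cap survivors at $\Oh(k)$, is exactly the intended argument.

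One small inaccuracy worth correcting: you root the tree at a single block and descend $L = \lceil \log_b(n/k)\rceil$ levels, so your leaves are blocks of size about $k$ rather than singletons, and you defer the remaining resolution to an unspecified ``extra refinement step at negligible cost''. The paper instead starts the hierarchy at $T_0 = \Theta(k)$ blocks and defines $R$ so that $kb^R \geq n$, i.e.\ the final partition $\mathcal{P}_R$ already consists of singletons; hence the output $S_R$ directly has $\Oh(k)$ \emph{coordinates}. Your version, as written, would output $\Oh(k)$ \emph{blocks of size $\Theta(k)$}, and closing the gap honestly costs either $\Theta(\log k/\log b)$ further $b$-ary levels or a $k$-ary final level with $\Theta(k^2)$ point queries, which is not always negligible relative to the stated decoding bound. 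The fix is simply an index shift: run the levels over blocks of size $n/(kb^r)$ for $r = 0, \ldots, R$ (so the top level has $\Theta(k)$ blocks handled by the count-sketch primitive and the bottom level has singletons); then everything you wrote goes through verbatim and matches the paper.
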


\begin{proof}

Let $R$ be the smallest integer such that $kb^R \geq n$; this means that $R = \left \lceil \log(n/k)/\log b \right \rceil$. For $r=0,\ldots,R$ we use the \textsc{One-Bit} \textsc{PartitionCountSketch} scheme guaranteed by Lemma, with $\delta = \delta/(bkR)$ and partition $\mathcal{P}_r = \{ \{1,\ldots,\left \lceil \frac{n}{kb^r} \right \rceil\}, \{\left \lceil \frac{n}{kb^r} \right \rceil +1, \ldots,2\left \lceil \frac{n}{kb^2} \right \rceil\} , \ldots  \}$, of size $T_r = \Theta(kb^r)$. 

The total number of measurements equals

\[	\Oh( R k \log(bkR/\delta)) = \Oh( k \frac{\log(n/k)}{\log b} (\log(k/\delta) + \log\log(n/k) - \log \log b )).		\]

We can think of the partitions $T_1,T_2,\ldots, T_R$ as the levels of a $b$-ary tree; for every set if $T \in T_r$, there are $b$ sets $T' \in T_{r+1}$ which are neighbours of $T$.
The decoding algorithms starts at quering the \textsc{One-Bit} \textsc{PartitionCountSketch} for $r=0$ to obtain a set $S_0$. Then, for every $i \in [1,r]$, it computes all the neighbours of $S_{r-1}$, where the  for a total of $\Oh(b |S|)$ sets. Then using \textsc{One-BitPartitionPointQuery} we query every new partition, to obtain a set $S_r$ of size $\Oh(k)$. The output of the algorithm is the set $S_{R}$. The running time then is computed as 

\[	\Oh( b R k \log(bkR/\delta) =\Oh( b k \frac{\log(n/k)}{\log b} (\log(k/\delta) + \log\log(n/k) - \log \log b ))\]
\end{proof}

From the above lemma, we get the following result, by carefully instatianting the parameter $b$.

\begin{lemma}
There exists a $b$ such that the \textsc{One-Bit} $b$-tree uses $\Oh(\gamma^{-1}k \log (n/\delta))$ measurements and runs in time $\Oh(\gamma^{-1}(k \log (n/\delta))^{2+\gamma})$, for any arbitarily constant $\gamma$.
\end{lemma}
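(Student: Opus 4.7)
The plan is to instantiate Lemma \ref{btree} with a branching factor $b$ that balances the two bounds. Observe that in the previous lemma the number of measurements scales like $1/\log b$ while the decoding time scales like $b/\log b$, so raising $b$ trades measurements for time; the whole job is to pick the right operating point on this curve.

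Concretely, I would set $b = (k\log(n/\delta))^{1+\gamma}$. Writing $L = \log(n/\delta)$ for brevity, this gives $\log b = (1+\gamma)\log(kL)$, and the common factor $\log(k/\delta)+\log\log(n/k)-\log\log b$ appearing in both bounds of Lemma \ref{btree} can be absorbed into $O(L)$ since $\log(k/\delta)\le L$ and the two $\log\log$ terms are dominated by it. Substituting into the measurement bound then yields
\[
M = O\!\left(k\cdot \frac{\log(n/k)}{\log b}\cdot L\right) = O(\gamma^{-1} kL),
\]
using that in the sparse-recovery regime $n$ is polynomially bounded in $kL$, so $\log(n/k) = O(\log(kL)/\gamma)$. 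Plugging the same $b$ into the running-time bound yields
\[
T = O(bM) = O\!\left((kL)^{1+\gamma}\cdot \gamma^{-1} kL\right) = O\!\left(\gamma^{-1}(k\log(n/\delta))^{2+\gamma}\right),
\]
which is what is claimed.

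The main obstacle is that the correct $b$ is not suggested by Lemma \ref{btree} alone but must be reverse-engineered from the target running time: matching $b\cdot kL = (kL)^{2+\gamma}$ forces $b = (kL)^{1+\gamma}$. Once this choice is made the rest is a mechanical substitution; the only care needed is in absorbing the lower-order $\log\log$ terms into the $O(\cdot)$ and in checking that the ratio $\log(n/k)/\log b$ is indeed $O(\gamma^{-1})$ in the regime to which the lemma is applied, which is the only place the proof uses anything beyond the previous lemma.
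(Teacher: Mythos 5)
Your choice $b = (k\log(n/\delta))^{1+\gamma}$ differs from the paper's $b = (k\log(n/\delta))^{\gamma}$, and in itself that is not fatal, but the justification you give for the measurement bound has a genuine gap. You bound the bracketed factor $\log(k/\delta)+\log\log(n/k)-\log\log b$ by $O(L)$ with $L=\log(n/\delta)$, which is far too loose: for $\delta$ not vanishingly small this factor is really only $O(\log k+\log\log n)=O(\log(kL))$, i.e. exponentially smaller than $L$. Having thrown away that cancellation, you are forced to rescue the calculation with the assumption ``$n$ is polynomially bounded in $kL$,'' giving $\log(n/k)=O(\log(kL)/\gamma)$. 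But this is not part of Lemma~\ref{btree}'s hypotheses and fails in general --- for instance $k=O(1)$ with $n$ arbitrary and $\delta$ constant gives $\log(n/k)=\Theta(\log n)$ while $\log(kL)=\Theta(\log\log n)$, so the ratio is unbounded, not $O(1/\gamma)$.

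The paper's derivation instead keeps the bracketed factor as $\Theta(\log(k/\delta)+\log\log n)$ and observes that, with $b=(kL)^{\gamma}$, this is precisely $\gamma^{-1}\log b$, so the factor cancels against $\log b$ in the measurement expression and one is left with $\gamma^{-1}k\log(n/k)\le\gamma^{-1}k\log(n/\delta)$ with no further assumption. If you tighten your bracket estimate to $O(\log(kL))$, your larger $b$ does give $M=O(kL)\le O(\gamma^{-1}kL)$ and decoding time $bM=O((kL)^{2+\gamma})$; in fact the paper's smaller $b$ already gives decoding time $O(\gamma^{-1}(kL)^{1+\gamma})$, so the $2+\gamma$ exponent in the statement is weaker than necessary. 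The fix to your write-up is therefore to replace the $O(L)$ bound on the bracket with $O(\log(kL))$ and delete the polynomial-boundedness assumption, at which point the cancellation against $\log b$ is immediate.
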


\begin{proof}

We set $ b= (k \log (n/\delta))^{\gamma}$ and observe that the number measurements is at most

			\[ \Oh \left( \ k \frac{\log n}{\gamma(\log (k/\delta) + \log \log n)} (\log (k/\delta) + \log \log n) \right) = \Oh\left ( \frac{1}{\gamma} k \log (n/\delta) \right),		\]

while the decoding time becomes

			\[ \Oh \left( (k \log (n/\delta))^{\gamma} k \frac{\log n}{ \gamma(\log (k/\delta)+ \log \log n)}(\log (k/ \delta) + \log \log n) \right) = \Oh \left( \frac{1}{\gamma} (k \log (n/\delta))^{2+\gamma} \right)				\]
\end{proof}

\subsection{\textsc{One-Bit} \textsc{ExpanderSketch}}

In this subsection we prove Theorem \ref{thm:smallsparsity}. Given the results about $\textsc{One-BitPartitionCountSketch}$ we developed in the previous sections, the proof of the theorem is almost identical to \cite{larsen2016heavy} with a very simple modification. For completeness, we go again over their construction. We remind the reader that in our case $k = \Oh( \log n)$. \newline 

\textbf{Construction of the Sensing Matrix}: We first pick a code $\mathrm{enc}:\{0,1\}^{\log n} \rightarrow \{0,1\}^{\Oh(\log n)}$, which corrects a constant fraction of errors with linear-time decoding; such a code is guaranteed by \cite{spielman1996linear}. We then partition $\mathrm{enc}(i)$ into $s= \Theta( \log n/ \log \log n)$ continuous substrings of length $t = \Theta(\log \log n)$. We denote by $\mathrm{enc}(i)_j$ the $j$-th bitstring of length $t$ in $\mathrm{enc}(i)$.

We define $s$ hash functions  $h_1,h_2,\ldots,h_s :[n] \rightarrow [\mathrm{poly}(\log n)]$. Let also $F$ be an arbitrary $d$-regular connected expander on the vertex set $[s]$ for some $d= \Oh(1)$. For $j\in[s]$, we define $\Gamma_j \subset [s]$ as the set of neighbours of $j$. Then, for every $j \in [n]$  we define the bit-strings

\[	m_{i,j} = h_j(i) \circ \mathrm{enc}(i)_j \circ h_{\Gamma_1(j)}(i) \ldots \circ h_{\Gamma_d(j)}(i),	\]

and the following partitions $\mathcal{P}^{(j)}$ containg set $P^{(j)}_{m_{i,j}}$, where $m_{i,j}$ is a string of $\Theta(t)$ bits, such that:

		\[ \forall i \in [n],  i \in P^{(j)}_{m_{i,j}}	\]

Then for every partition $\mathcal{P}^{(j)}$ we pick a random matrix $\Phi^{(j)}$ using Lemma \ref{partitioncountsketch} with sparsity $k$, as well as  a random matrix $Z^{(j)}$ using Lemma \ref{partitionsketch} with sparsity $k$ and failure probability $\frac{1}{\mathrm{poly}(\log n)}$. Each of these matrices has $\Oh( k \log ( 2^{\Oh(t)})) = \Oh(k t) = \Oh( k \log \log n)$ rows. The total number of rows is $\Oh(s k \log \log n) = \Oh(k \log n)$.
Then our sensing matrix is the vertical concatenation of $\Phi^{(1)},Z^{(1)},\ldots,\Phi^{(s)},Z^{(s)}$.

\textbf{Decoding Algorithm}: For every $j \in [s]$ we run the decoding algorithm of Lemma \ref{partitioncountsketch} on matrix $\Phi^{(j)}$ to obtain a list $L_j$ of size $\Oh(k)$ such that every ``heavy'' set of $\mathcal{P}^{(j)}$ is included. The running time in total is $ m \cdot k \cdot \mathrm{poly}(\log n) = \mathrm{poly}(\log n)$. For every $j \in [s]$, we now have that:

\begin{itemize}
\item With probability $1/\mathrm{poly}(\log n)$, $h_j$ perfectly hashes every $P_{m_{i,j}}^{(j)}$ for every $i \in H(x,k)$. 
\item With probability $1/\mathrm{poly}(\log n)$, for every $i^* \in H(x,k)$, $\|x_{P^{(j)}_{m_{i,j}}}\|_2 \geq \frac{9}{10} \|x_{-k}\|_2.$ 
\item With probability $1/\mathrm{poly}(\log n)$, the decoding procedure on $\Phi^{(j)}$ succeeds. This follows by taking a union bound over the events of the previous two bullets and the failure probability guarantee of Lemma \ref{partitioncountsketch} in our instance.
 \end{itemize}

We call by ``name'' of $P^{(j)}_{m_{i,j}}$ the $\Oh(\log \log n)$-length substring of  bits of $m_{i,j}$, which correspond to the bits of $h_j(i)$.
 We then filter out vertices in layer $j$, by keeping only those that have unique names. Our next step is to point-query every set $z \in L_j$ using the matrices $Z^{(j)}$ and Theorem \ref{partitionsketch} and keep the largest $\Oh(k)$ coordinates; this is the difference with \cite{larsen2016heavy}, since we can implement only one-bit point query. Now we let $G$ be the graph created by including the at most $(d/2)\sum_{j=1}^s L_j$ edges suggested by the z's across all $L_j$, where we only include an edge if both endpoints suggest it. Now the algorithm and analysis proceeds exactly as \cite{larsen2016heavy}.








\bibliographystyle{alpha}
\bibliography{biblio}

\newcommand{\etalchar}[1]{$^{#1}$}
\begin{thebibliography}{GNP{\etalchar{+}}13}

\bibitem[ABK17]{acharya2017improved}
Jayadev Acharya, Arnab Bhattacharyya, and Pritish Kamath.
\newblock Improved bounds for universal one-bit compressive sensing.
\newblock {\em arXiv preprint arXiv:1705.00763}, 2017.

\bibitem[BB08]{boufounos20081}
Petros~T Boufounos and Richard~G Baraniuk.
\newblock 1-bit compressive sensing.
\newblock In {\em Information Sciences and Systems, 2008. CISS 2008. 42nd
  Annual Conference on}, pages 16--21. IEEE, 2008.

\bibitem[BFN{\etalchar{+}}16]{baraniuk2016one}
Rich Baraniuk, Simon Foucart, Deanna Needell, Yaniv Plan, and Mary Wootters.
\newblock One-bit compressive sensing of dictionary-sparse signals.
\newblock {\em arXiv preprint arXiv:1606.07531}, 2016.

\bibitem[BFN{\etalchar{+}}17]{baraniuk2017exponential}
Richard~G Baraniuk, Simon Foucart, Deanna Needell, Yaniv Plan, and Mary
  Wootters.
\newblock Exponential decay of reconstruction error from binary measurements of
  sparse signals.
\newblock {\em IEEE Transactions on Information Theory}, 63(6):3368--3385,
  2017.

\bibitem[CDD09]{cdd09}
Albert Cohen, Wolfgang Dahmen, and Ronald DeVore.
\newblock Compressed sensing and best $k$-term approximation.
\newblock {\em Journal of the American mathematical society}, 22(1):211--231,
  2009.

\bibitem[CH08]{cormode2008finding}
Graham Cormode and Marios Hadjieleftheriou.
\newblock Finding frequent items in data streams.
\newblock {\em Proceedings of the VLDB Endowment}, 1(2):1530--1541, 2008.

\bibitem[CT05]{candes2005decoding}
Emmanuel~J Candes and Terence Tao.
\newblock Decoding by linear programming.
\newblock {\em IEEE transactions on information theory}, 51(12):4203--4215,
  2005.

\bibitem[Don06]{donoho2006compressed}
David~L Donoho.
\newblock Compressed sensing.
\newblock {\em IEEE Transactions on information theory}, 52(4):1289--1306,
  2006.

\bibitem[GLP{\etalchar{+}}10]{gunturk2010sigma}
C~Sinan G{\"u}nt{\"u}rk, Mark Lammers, Alex Powell, Rayan Saab, and
  {\"O}zg{\"u}r Yilmaz.
\newblock Sigma delta quantization for compressed sensing.
\newblock In {\em Information Sciences and Systems (CISS), 2010 44th Annual
  Conference on}, pages 1--6. IEEE, 2010.

\bibitem[GLPS12]{gilbert2012approximate}
Anna~C Gilbert, Yi~Li, Ely Porat, and Martin~J Strauss.
\newblock Approximate sparse recovery: optimizing time and measurements.
\newblock {\em SIAM Journal on Computing}, 41(2):436--453, 2012.

\bibitem[GLPS17]{gilbert2017forall}
Anna~C Gilbert, Yi~Li, Ely Porat, and Martin~J Strauss.
\newblock For-all sparse recovery in near-optimal time.
\newblock {\em ACM Transactions on Algorithms (TALG)}, 13(3):32, 2017.

\bibitem[GNJN13]{gopi2013one}
Sivakant Gopi, Praneeth Netrapalli, Prateek Jain, and Aditya Nori.
\newblock One-bit compressed sensing: Provable support and vector recovery.
\newblock In {\em International Conference on Machine Learning}, pages
  154--162, 2013.

\bibitem[GNP{\etalchar{+}}13]{gilbert2013l2}
Anna~C Gilbert, Hung~Q Ngo, Ely Porat, Atri Rudra, and Martin~J Strauss.
\newblock l\_2/l\_2-foreach sparse recovery with low risk.
\newblock In {\em International Colloquium on Automata, Languages, and
  Programming}, pages 461--472. Springer, 2013.

\bibitem[GSTV07]{gilbert2007one}
Anna~C Gilbert, Martin~J Strauss, Joel~A Tropp, and Roman Vershynin.
\newblock One sketch for all: fast algorithms for compressed sensing.
\newblock In {\em Proceedings of the thirty-ninth annual ACM symposium on
  Theory of computing}, pages 237--246. ACM, 2007.

\bibitem[JDDV13]{jacques2013quantized}
Laurent Jacques, K{\'e}vin Degraux, and Christophe De~Vleeschouwer.
\newblock Quantized iterative hard thresholding: Bridging 1-bit and
  high-resolution quantized compressed sensing.
\newblock {\em arXiv preprint arXiv:1305.1786}, 2013.

\bibitem[JLBB13]{jacques2013robust}
Laurent Jacques, Jason~N Laska, Petros~T Boufounos, and Richard~G Baraniuk.
\newblock Robust 1-bit compressive sensing via binary stable embeddings of
  sparse vectors.
\newblock {\em IEEE Transactions on Information Theory}, 59(4):2082--2102,
  2013.

\bibitem[KSW16]{knudson2016one}
Karin Knudson, Rayan Saab, and Rachel Ward.
\newblock One-bit compressive sensing with norm estimation.
\newblock {\em IEEE Transactions on Information Theory}, 62(5):2748--2758,
  2016.

\bibitem[LNNT16]{larsen2016heavy}
Kasper~Green Larsen, Jelani Nelson, Huy~L Nguy{\^e}n, and Mikkel Thorup.
\newblock Heavy hitters via cluster-preserving clustering.
\newblock In {\em Foundations of Computer Science (FOCS), 2016 IEEE 57th Annual
  Symposium on}, pages 61--70. IEEE, 2016.

\bibitem[LRRB05]{le2005analog}
Bin Le, Thomas~W Rondeau, Jeffrey~H Reed, and Charles~W Bostian.
\newblock Analog-to-digital converters.
\newblock {\em IEEE Signal Processing Magazine}, 22(6):69--77, 2005.

\bibitem[LXZL18]{li2018survey}
Zhilin Li, Wenbo Xu, Xiaobo Zhang, and Jiaru Lin.
\newblock A survey on one-bit compressed sensing: theory and applications.
\newblock {\em Frontiers of Computer Science}, pages 1--14, 2018.

\bibitem[Nak17]{nakos2017fast}
Vasileios Nakos.
\newblock On fast decoding of high-dimensional signals from one-bit
  measurements.
\newblock In {\em 44th International Colloquium on Automata, Languages, and
  Programming, {ICALP} 2017, July 10-14, 2017, Warsaw, Poland}, pages
  61:1--61:14, 2017.

\bibitem[PS12]{porat2012sublinear}
Ely Porat and Martin~J Strauss.
\newblock Sublinear time, measurement-optimal, sparse recovery for all.
\newblock In {\em Proceedings of the twenty-third annual ACM-SIAM symposium on
  Discrete Algorithms}, pages 1215--1227. Society for Industrial and Applied
  Mathematics, 2012.

\bibitem[PV13a]{plan2013one}
Yaniv Plan and Roman Vershynin.
\newblock One-bit compressed sensing by linear programming.
\newblock {\em Communications on Pure and Applied Mathematics},
  66(8):1275--1297, 2013.

\bibitem[PV13b]{plan2013robust}
Yaniv Plan and Roman Vershynin.
\newblock Robust 1-bit compressed sensing and sparse logistic regression: A
  convex programming approach.
\newblock {\em IEEE Transactions on Information Theory}, 59(1):482--494, 2013.

\bibitem[PV14]{plan2014dimension}
Yaniv Plan and Roman Vershynin.
\newblock Dimension reduction by random hyperplane tessellations.
\newblock {\em Discrete \& Computational Geometry}, 51(2):438--461, 2014.

\bibitem[Spi96]{spielman1996linear}
Daniel~A Spielman.
\newblock Linear-time encodable and decodable error-correcting codes.
\newblock {\em IEEE Transactions on Information Theory}, 42(6):1723--1731,
  1996.

\bibitem[Wal99]{adcsurvey}
Robert~H Walden.
\newblock Analog-to-digital converter survey and analysis.
\newblock {\em IEEE Journal on selected areas in communications},
  17(4):539--550, 1999.

\end{thebibliography}

\end{document}